\documentclass{iopjournal} % {{{*
\usepackage[T1]{fontenc}
\usepackage{lmodern}
\usepackage{physics}
\usepackage{graphicx}
\usepackage{bm}
\usepackage{dsfont}
\usepackage{amssymb}
\usepackage{amsthm}
\usepackage{mathtools}
\allowdisplaybreaks
\usepackage{xcolor}
\definecolor{jacolor}{RGB}{200,40,0}
\definecolor{FPcolor}{HTML}{003CC8}
\definecolor{mgcolor}{RGB}{128,0,128}

\usepackage[final,inline,nomargin]{fixme} \fxsetup{theme=color}
\FXRegisterAuthor{ja}{aja}{\color{jacolor}JA}
\FXRegisterAuthor{af}{aaf}{\color{olive}AF}
\FXRegisterAuthor{fp}{afp}{\color{FPcolor}FP}
\FXRegisterAuthor{pca}{apc}{\color{teal}PCA}
\FXRegisterAuthor{mg}{amg}{\color{mgcolor}MG}

\usepackage{hyperref}
\hypersetup{
  colorlinks=true,
  linkcolor=blue,
  filecolor=blue,      
  citecolor=blue,
  urlcolor=blue,
  pdftitle={Advances in the quantum-state texture theory},
  pdfauthor=author={Jose Alfredo de Leon, Miguel Gonzalez, Alejandro Fonseca, Pedro C. Azado, Fernando Parisio},
}

\usepackage{leftidx}
\newcommand{\Eref}[1]{Eq.~(\ref{#1})} 
\newcommand{\Sref}[1]{Sec.~\ref{#1}}

\newcommand{\cS}{\mathcal{S}}

\newcommand{\cj}{Choi-Jamio\l{}kowski}

\newcommand{\unam}{Universidad Nacional Aut\'onoma de M\'exico, Ciudad de M\'exico 04510, Mexico}
\newcommand{\ifunam}{Instituto de F\'{\i}sica, \unam}
\newcommand{\icn}{Instituto de Ciencias Nucleares, \unam}

\newtheorem{proposition}{Proposition}
\begin{document}

\articletype{Paper}
% Title, authors, and affiliations {{{*
\title{Advances in the quantum-state texture theory}

\author{
  Jose Alfredo de Leon$^{1,\dagger}$\orcid{0000-0003-0045-9017}, 
  Miguel Gonzalez$^2$\orcid{0009-0004-0112-5988},
  Alejandro Fonseca$^3$\orcid{0000-0003-2633-9734},
  Pedro C. Azado$^4$\orcid{0000-0001-7319-9098} and
  Fernando Parisio$^{4,*}$\orcid{0000-0001-5818-8366}
}

\affil{$^1$ \ifunam}

\affil{$^2$ \icn}

\affil{$^3$ Departamento de Física, Universidad Nacional de Colombia, 111321, Bogotá, Colombia}

\affil{$^4$ Departamento de
  F\'{\i}sica, Centro de Ci\^encias Exatas e da Natureza, Universidade Federal de Pernambuco, Recife, Pernambuco 50670-901 Brazil}

\affil{${}^{*}$ Author to whom any correspondence should be addressed.}

\email{${}^\dagger$\href{mailto:deleongarrido.jose@gmail.com}{deleongarrido.jose@gmail.com}, ${}^*$\href{mailto:fernando.parisio@ufpe.br}{fernando.parisio@ufpe.br}}

\keywords{quantum-state texture, quantum resource theory, qubit, bounds, set texture}

% *}}}
\begin{abstract} % {{{*
The recently introduced concept of quantum-state texture (QST) has found applications ranging from the identification of unknown quantum gates to the study of quantum phase transitions and criticality, and has already been experimentally investigated.
Here, we advance its resource-theoretic formulation in several directions.
Our approach centers on the experimentally accessible grand sum---the sum of all matrix elements of a density operator in a given basis.
We clarify several points raised in the recent literature, including the relation between texture distillation and a simple grand-sum-based QST monotone.
We completely characterize free operations for a single qubit and study the resulting state-conversion order; classify relevant families of free operations in arbitrary dimensions; derive bounds relating the texture of composite systems to that of their subsystems; and introduce the basis-independent concept of set texture.
Together, these results strengthen and broaden the resource theory of QST, providing a foundation for its further development and applications.
\end{abstract} % *}}}
\section{Introduction} % {{{* 1
Coherent superpositions lie behind a vast range of quantum effects, with fundamental implications on both the foundations of quantum physics and emerging technologies. 
As several non-classical phenomena, in the last decade, coherence has been studied under the standpoint of quantum resource theories, a cornerstone of modern quantum information science. By partitioning quantum states and operations into ``free'' elements (those easily accessible under specific physical constraints) and ``resources'' (which cannot be generated under those constraints), quantum resource theories provide a rigorous mathematical framework for quantifying the utility of physical systems. This structural approach has been applied to various quantum properties, including quantum entanglement, quantum coherence, asymmetry, and non-stabilizerness~\cite{gour}. Each framework provides distinct insights into the foundational limitations and operational capabilities of quantum systems under specific physical constraints.

The characterization of different aspects of quantum coherence, such as its direct quantification as a formal resource \cite{l1}, its degree of ``imaginarity'' \cite{imag}, and genuine multilevel coherence \cite{ringbauer}, has proved useful across different fields. More recently, a related concept was developed with the introduction of the quantum-state texture (QST) resource theory \cite{qst}. This quantity has direct algebraic and geometric interpretations and is experimentally accessible~\cite{QSTexperiment}.

The first application of QST appeared in Ref.~\cite{qst} and consists of a protocol to identify unknown quantum gates without the need of full state tomography.
Since then, many other aspects have been explored. It has been noticed in Ref.~\cite{salazar} that, because the resourceless set in QST corresponds to a single pure state (being an extremal point), there are non-trivial constraints on candidate monotones in generic frameworks intended to encompass generic resource theories.
Many (but not all) features of QST resource theory do not rely on the specific form of the textureless state but rather on the fact that it is a pure state. This observation has been used by the authors of Ref.~\cite{generalization} to propose a unified ``fixed-point'' resource theory and use this generalization in a protocol to identify unknown gates in quantum circuit layers, following the general lines of the original scheme presented in Ref.~\cite{qst}.

Several works building on QST quantification have also appeared. These works can be split into two large classes: those introducing measures based on the grand sum (the sum of all density matrix entries) \cite{qst,tinggui}, and those proposing other ways to quantify QST \cite{tinggui,cao,kim,Muthu,yu,lei,mondal}. 

Interesting connections between QST and other resources, such as purity, non-stabilizerness, and entanglement, have been reported in Ref.~\cite{aditi}. In the same work, a surprising relation between QST and quantum phase transitions is described in the context of Ising models \cite{aditi}. The authors showed that the texture of ground states acts as a precise indicator for quantum phase transitions in the transverse Ising model, revealing critical points where several traditional markers often remain subtle. This line of research has recently been further investigated in Ref.~\cite{lucas} for dynamical criticality.

The concept of QST has also appeared in a variety of other contexts, for example, in relation to the capacity of quantum batteries \cite{bateries}; in relativistic quantum field theory \cite{relativity,lorentz,BH,zhang2026}; in addressing the principle of superposition \cite{sup}; and in general frameworks for resource quantification \cite{mondal}.

In the present work, we advance the theory of QST in several directions. In \Sref{sec:preliminaries}, we review the basic definitions and grand-sum-based measures of QST and discuss their operational meaning. In \Sref{sec:free-qubit-operations}, we characterize the complete set of free operations for a single qubit, and in \Sref{sec:partial-ordering}, we study the conditions under which one qubit state can be freely transformed into another. In \Sref{sec:survey-free-operations}, we consider higher-dimensional systems and present examples of operations that preserve, reduce, or erase texture. In \Sref{sec:bipartite-systems}, we establish relations between the texture of a bipartite system and that of its subsystems. Finally, in \Sref{sec:set-texture}, we extend QST from individual states to families of states and compare different ways of quantifying their texture in a common basis.

% *}}}
\section{Preliminaries} % {{{* 2
\label{sec:preliminaries}
For a fixed basis $\{|j\rangle\}$, referred to as the computational basis, there is a single quantum state for which all matrix elements equal the same constant value:
\begin{equation}
|f_1\rangle=\frac{1}{\sqrt{D}}\sum_{j=1}^D|j\rangle,
\end{equation}
where $D$ is the dimension of the corresponding Hilbert space, ${\cal H}$.
In the resource theory of QST, this textureless or flat state (see Fig.~\ref{fig1}) corresponds to the entire resourceless set.

Therefore, the single additional requirement for an arbitrary completely positive and trace preserving (CPTP) map $\Lambda$ to be a free operation is to have $|f_1\rangle$ as a fixed point, $\Lambda(f_1)=f_1$, with $f_1=|f_1\rangle \langle f_1|$.
Interestingly, the other states in the Fourier basis 
\begin{equation}
\label{fourier}
|f_k\rangle=\frac{1}{\sqrt{D}}\sum_{j=1}^D \omega_D^{(k-1)(j-1)}|j\rangle,
\end{equation}
maximize QST, as well as any linear combination of them, with $\omega_D=e^{2\pi i/D}$ and $k \ge 2$. 

In terms of QST quantification, given an arbitrary QST measure, ${\cal T}$, the following properties must hold:
\begin{itemize}
\item Vanishing at the flat state: ${\cal T}(f_1)=0$,
\item Monotonicity: ${\cal T}(\varrho)\ge {\cal T}(\Lambda(\varrho))$ for all $\varrho$ in the Hilbert-Schmidt space,
\item Convexity: state texture must not increase under mixing ${\cal T}(\sum p_i \varrho_i) \le \sum p_i {\cal T}(\varrho_i)$. 
\end{itemize}
%
%Note that the non-negativity of ${\cal T}$ follows from (i) and (ii). 
In Ref.~\cite{qst} and its supplemental material it is shown that any well-behaved decreasing function of
\begin{equation}
\label{GS}
D \langle f_1|\varrho|f_1 \rangle= \sum_{i, j=1}^D\varrho_{ij} \equiv \Sigma(\varrho).
\end{equation}
Hereafter we refer to $\Sigma(\varrho)$ as the grand sum of $\varrho$ in the chosen basis. It is immediately clear that %
\begin{equation}
\Sigma\left(\sum p_i \varrho_i\right) = \sum p_i \Sigma(\varrho_i),
\end{equation}
that is, the grand sum is invariant under mixing. Because of this, any convex function ${\cal T}$ of $\Sigma$ is convex. From the previous equation we have
\begin{equation}
{\cal T}\left[\Sigma\left(\sum p_i \varrho_i\right)\right] = {\cal T}\left[\sum p_i \Sigma(\varrho_i)\right],
\end{equation}
and from Jensen's inequality \cite{ineq} we directly obtain:
\begin{equation}
{\cal T}\left[\Sigma\left(\sum p_i \varrho_i\right)\right] \le \sum p_i {\cal T}\left[\Sigma(\varrho_i)\right],
\end{equation}
which corresponds to the convexity of ${\cal T}[\Sigma(\varrho)]$ as a QST quantifier.

Therefore, any well-behaved convex decreasing function of $\Sigma$ that vanishes at $f_1$ is a proper QST quantifier \cite{qst}. This is immediate because the grand sum has been shown to be non-decreasing under free operations (making grand-sum-based QST non-increasing under free operations) \cite{qst}. If, e.g., one takes ${\cal T}(x)=-\ln{x}$, the resulting measure is the rugosity \cite{qst}. Another specific example of measures ${\cal T}$ satisfying all previous requirements is given by the following family:
\begin{equation}
T_{\mu, \nu}(\varrho)=(1-\langle f_1|\varrho|f_1 \rangle^{\mu})^{\nu}
\end{equation}
The functions $T_{\mu, \nu}(\varrho)$ correspond to valid QST measures for $0\le \mu\le 1$ and $\nu\ge 1$, providing particular examples of what is described in Ref.~\cite{qst}.

Yet, the simplest instance, with $\mu=\nu=1$, was addressed as a separate quantity in Ref.~\cite{tinggui} and named the geometric measure. Now we show that this quantity corresponds to the asymptotic distillation rate of QST, $r$, in a fairly simple protocol. Suppose that one has a large number $N$ of copies of an arbitrary quantum state $\sigma$. From $\varrho=\sigma^{\otimes N}$ one intends to extract $M\le N$ copies of maximum-texture states, say $f_2\equiv |f_2\rangle \langle f_2|$. All we have to do is make projective measurements in the Fourier basis. Every time the result $f_1$ is obtained (with probability $\langle f_1|\sigma|f_1 \rangle$), the system is discarded. For all other outcomes, a maximum-texture state is obtained and, for each of them, there is a free unitary taking $f_k$ to $f_2$, $k\ge 2$. Therefore, the asymptotic number of distilled states is $M=(1-\langle f_1|\sigma|f_1 \rangle)N=T(\sigma)N$, with $T(\sigma)\equiv T_{1,1}(\sigma)$. Therefore,
\begin{equation}
r=\frac{M}{N}\longrightarrow T(\sigma), \;\;N \rightarrow \infty.
\end{equation}
This provides an operational interpretation of the monotone $T$ as the ``distillable texture'' (or ``distillable QST'').

Another important QST monotone is the rugosity
\begin{equation}
\label{rug}
\mathfrak{R}(\varrho)=-\ln \left(\frac{\Sigma(\varrho)}{D}\right)=-\ln\langle f_1|\varrho|f_1 \rangle,
\end{equation}
corresponding to a quantifier that is additive under tensoring. That is $\mathfrak{R}(\varrho^{\otimes n})=n\mathfrak{R}(\varrho)$ \cite{qst}. This is an important property, especially when dealing with systems composed of a large number of parties, as illustrated in its application to spin chains \cite{aditi,lucas}.
We remark that, in general, additivity ceases to be valid if one defines less symmetric states as resourceless, as is done in Ref.~\cite{generalization}. This is because $f_1$ is preserved by tensor products: $|f_1^{(D)}\rangle=|f_1^{(D_1)}\rangle\otimes |f_1^{(D_2)}\rangle$, where $D=D_1\times D_2$, $D_1$ and $D_2$ are the dimensions of the Hilbert spaces of subsystems 1 and 2, respectively.

\begin{figure}
\begin{center}
\includegraphics[height=5cm]{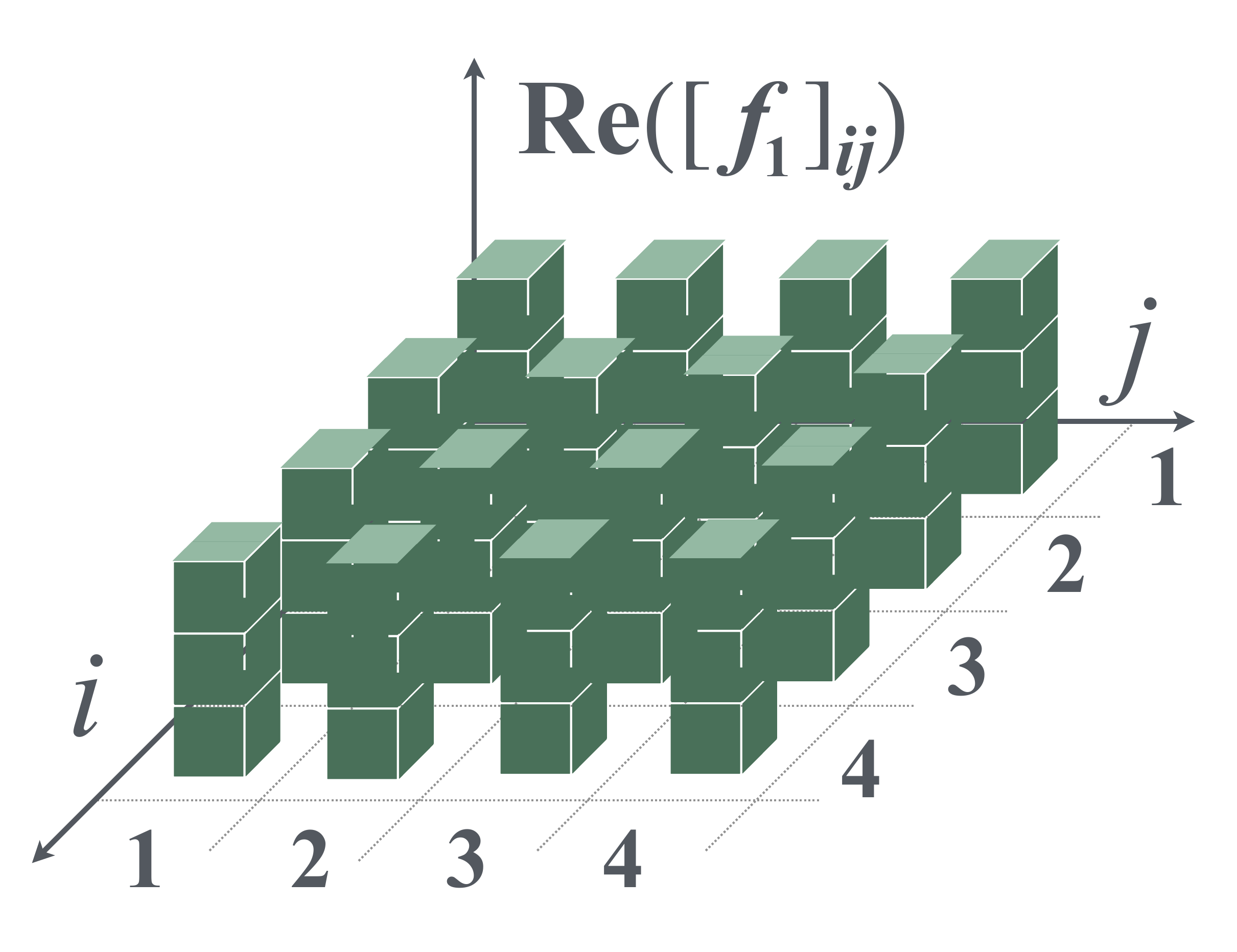}
\includegraphics[height=5cm]{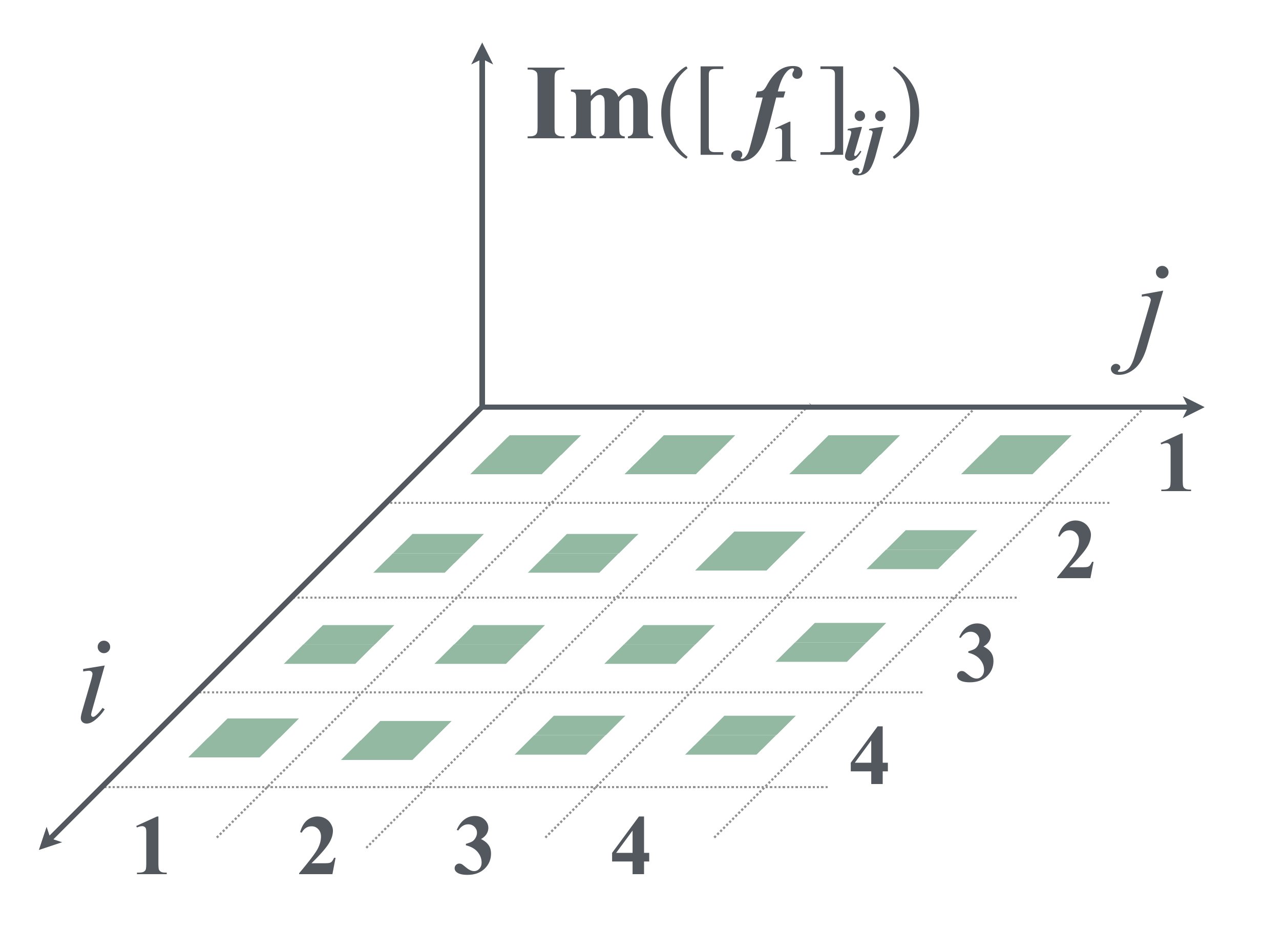}
  \caption{Pictorial 3D view of the real (left) and imaginary (right) parts of $f_1$ matrix elements. This is the only quantum state for which both plots are flat. }
  \label{fig1}
    \end{center}
\end{figure}
Suppose, for instance, that in the context of ``fixed point'' resource theories \cite{generalization}, one chooses the resourceless state as the one for which the probability of finding the system in the ground state is twice the probability of observing it in the first excited state which, in turn, is twice as large as the probability to find it in the second excited state, and so on. If one assumes that the coefficients are positive real numbers, this state, for a qubit, reads $|g_1^{(2)}\rangle=(\sqrt{2}|0\rangle+|1\rangle)/\sqrt{3}$, and therefore $|g_1^{(2)}\rangle\otimes |g_1^{(2)}\rangle=(2|0\rangle+\sqrt{2}|1\rangle+\sqrt{2}|2\rangle+|3\rangle)/3$, where we set $|00\rangle=|0\rangle$, $|10\rangle=|1\rangle$, $|01\rangle=|2\rangle$, and $|11\rangle=|3\rangle$. However, we should have $|g_1^{(4)}\rangle=(2\sqrt{2}|0\rangle+2|1\rangle+\sqrt{2}|2\rangle+|3\rangle)/\sqrt{15}$. This illustrates the fact that, in general, $|g_1^{(D)}\rangle\ne|g_1^{(D_1)}\rangle\otimes |g_1^{(D_2)}\rangle$. 
Therefore, for general fixed-point resource theories the additivity of rugosity is lost.

It should also be noted that the grand sum can be expressed as $\Sigma(\varrho)= \sum_{i, j=1}^D{\rm Re}(\varrho_{ij})$, since $\varrho$ is Hermitian. Therefore, the statement that texture is ``sensitive to imaginarity'' \cite{qst} needs clarification. The point is that, for a fixed amount of coherence, $\sum_{i, j=1}^D|\varrho_{ij}|=const$., there is a continuum of possible texture values, depending on the imaginarity of the particular state. 

%============== Grand sum and stabilizerness ==========
%=======================================================
Here we note a direct stabilizer interpretation of the grand sum itself in addition to the connections already established between non-stabilizerness and QST~\cite{aditi}. For an \(N\)-qubit system of dimension \(D=2^N\), the textureless state in the computational basis is
\(|f_1\rangle=|+\rangle^{\otimes N}\), where
\(|+\rangle=(|0\rangle+|1\rangle)/\sqrt{2}\) and $\sigma_x\ket{+}=\ket{+}$. $\ket{f_1}$ is therefore a stabilizer state in the standard stabilizer formalism \cite{Gottesman1997}: specifically, it is the unique joint \(+1\) eigenstate of the \(N\) independent commuting generators

\begin{equation}
    \sigma^i_x|f_1\rangle=|f_1\rangle,
    \qquad i=1,\ldots,N.
\end{equation}

These generators define its stabilizer group

\begin{equation}
    \mathcal{S}_{f_1}
    =
    \langle \sigma_x^1,\ldots,\sigma_x^N\rangle
    =
    \{I,\sigma_x\}^{\otimes N}.
\end{equation}

Its projector therefore has the standard stabilizer representation

\begin{equation}
    P_{f_1}
    =
    |f_1\rangle\langle f_1|
    =
    \frac{1}{D}
    \sum_{S\in\mathcal{S}_{f_1}} S.
\end{equation}

Substituting in the grand sum~\ref{GS} the projector identity gives

\begin{equation}
    \Sigma(\varrho)
    =
    \sum_{S\in\mathcal{S}_{f_1}}
    \langle S\rangle_{\varrho}.
\end{equation}

Thus, the grand sum is exactly the expectation value of the stabilizer-group sum associated with the textureless state. A similar expression in the context of phase-transitions in quantum spin chains is given in Ref.~\cite{spinsphasetransition2026}.

%Equivalently, \(\Sigma(\rho)/D\) is the probability of obtaining the all-\(+1\) (trivial) syndrome in a simultaneous measurement of the commuting observables \(X_1,\ldots,X_N\).

%===========================================================

A final point worth noting is the following. From Eq. (\ref{GS}), it is clear that, for any physical state $\varrho$, we must have
\begin{equation}
\label{positive1}
\Sigma(\varrho)\ge0. 
\end{equation}
This is a necessary (but not sufficient) condition for positivity, which is easily computable. For instance, consider the hypothetical density operators given by 
\begin{equation}
\label{state}
\varrho=\frac{1}{D}\sum |i\rangle \langle i|-a\sum_{i\ne j}|i\rangle \langle j|
\end{equation}
with $a$ real and positive. The previous condition imposes $1-(D^2-D)a\ge 0$, that is $a\le 1/(D^2-D)$. Another well-known necessary condition \cite{cohen} reads
\begin{equation}
\label{positive2}
\varrho_{ii}\varrho_{jj}\ge\abs{\varrho_{ij}}^2.
\end{equation}
We note that (\ref{positive1}) and (\ref{positive2}) are able to detect non-positivity in different situations. For instance, the obviously non-positive operator ${\rm diag}[-1/2,+1/2,+1/2,+1/2]$, would pass condition (\ref{positive1}) but fail (\ref{positive2}).
On the other hand, the state (\ref{state}) would pass condition (\ref{positive2}) for $a\le 1/D$. Therefore, the two conditions are not equivalent.

% In the next section, we address the partial ordering induced by any quantifier based on the grand sum.

% *}}}
\section{Characterization of single-qubit free operations} % {{{* 3
\label{sec:free-qubit-operations}

We now characterize the set of QST free operations for a single qubit. An arbitrary qubit state can be written as
\begin{equation}\label{eq:qubit-bloch}
  \varrho =
  \frac{1}{2} \left(\mathds{1}+ x\sigma_x + y\sigma_y + z\sigma_z\right),
\end{equation}
where $\sigma_x$, $\sigma_y$, and $\sigma_z$ are the Pauli matrices, and $\bm r=(x,y,z)$ is the Bloch vector, satisfying $\norm{\bm r}\leq1$. Any single-qubit trace-preserving and positive, but not necessarily completely positive, map acts affinely on the Bloch ball as $\bm r\mapsto M\bm r+\bm t$, where $M$ is a real $3\times3$ matrix and $\bm t$ is a real vector~\cite{bethruskai_2002_analysis}. Thus, a QST free operation must satisfy two additional conditions: (1) it must preserve the textureless state, $\Lambda(f_1)=f_1$, and (2) it must be completely positive. We first derive the constraints imposed by these conditions, then give a geometric characterization of the resulting operations, and finally determine how they transform the grand sum.

The unnormalized \cj{} matrix of a free operation $\Lambda$ is an operator on the doubled Hilbert space $\mathcal H\otimes\mathcal H$. Using the basis $\{\ket{f_1},\ket{f_2}\}$, it reads
\begin{equation}\label{eq:cj_free_ops}
  J_\Lambda = \sum_{j,k=1}^{2} \dyad{f_j}{f_k}\otimes\Lambda(\dyad{f_j}{f_k}).
\end{equation}
Hermiticity of $J_\Lambda$, trace preservation, and the fixed-point condition $\Lambda(f_1)=f_1$ determine its block structure. The details of this derivation are given in Appendix~\ref{app:free-qubit-choi}. Therefore, before imposing complete positivity, the most general \cj{} matrix compatible with these constraints can be written in the ordered basis $\{\ket{f_1\otimes f_1},\ket{f_1\otimes f_2},
\ket{f_2\otimes f_1},\ket{f_2\otimes f_2}\}$ as
\begin{equation}
  J_\Lambda =
  \mqty(
    1 & 0 & a & b\\
    0 & 0 & c & -a \\
    a^* & c^* & 1-\eta & q \\
    b^* & -a^* & q^* & \eta
  ),
\end{equation}
where $\eta$ is real and $a,b,c,q$ are, in general, complex.

Complete positivity is equivalent to the \cj{} matrix being positive semidefinite, $J_\Lambda\geq0$~\cite{bengtsson_2006_geometry}. For a Hermitian matrix, this is equivalent to the non-negativity of all its principal minors, i.e., the determinants of the submatrices obtained by selecting the same set of rows and columns. Since the second diagonal entry of $J_\Lambda$ vanishes, the $2\times2$ principal minor associated with indices $2$ and $k$ is
\begin{equation}
  \det
  \mqty(
    0 & (J_\Lambda)_{2k} \\
    (J_\Lambda)_{2k}^* & (J_\Lambda)_{kk}
  )
  =
  -\abs{(J_\Lambda)_{2k}}^2.
\end{equation}
Non-negativity of this minor therefore requires $(J_\Lambda)_{2k}=0$ for every $k$. Applied to the matrix above, this gives $a=c=0$. Renaming $b=\lambda$, every free single-qubit operation has a \cj{} matrix of the form
\begin{equation}
  J_\Lambda =
  \mqty(
    1 & 0 & 0 & \lambda \\
    0 & 0 & 0 & 0 \\
    0 & 0 & 1-\eta & q \\
    \lambda^* & 0 & q^* & \eta
  ).
  \label{eq:free-qubit-choi}
\end{equation}
This already implies that one eigenvalue of $J_\Lambda$ is zero. Thus, the maximum Kraus rank of a QST free operation is three.

It remains to determine which values of $\eta$, $\lambda$, and $q$ make Eq.~\eqref{eq:free-qubit-choi} positive semidefinite. Since its second row and column vanish, it is sufficient to require the remaining $3\times3$ principal block to be positive semidefinite:
\begin{equation}
  \mqty(
    1 & 0 & \lambda \\
    0 & 1-\eta & q \\
    \lambda^* & q^* & \eta
  ).
\end{equation}
Requiring all its principal minors to be non-negative yields the complete set of constraints:
\begin{equation}\label{eq:free-qubit-cp}
  0\leq\eta\leq1,
  \qquad
  \abs{\lambda}^2\leq\eta,
  \qquad
  \abs{q}^2 \leq (1-\eta) \qty(\eta-\abs{\lambda}^2).
\end{equation}

Importantly, this characterization does not depend on the explicit representation of $\ket{f_1}$ in the computational basis. The derivation only uses the basis $\{\ket{f_1},\ket{f_2}\}$ adapted to the fixed state. We specialize to the QST computational basis only below, when describing the action of free operations geometrically on the Bloch ball.

To obtain a geometric characterization of QST free operations, we now fix the computational basis, so that the corresponding Fourier basis is $\{\ket{f_1},\ket{f_2}\}=\{\ket{+},\ket{-}\}$. In this basis, a single-qubit state with Bloch vector $\bm r=(x,y,z)$ reads
\begin{equation}
  \varrho =
  \frac{1}{2}
  \qty[
    (1+x)\dyad{+}{+}
    +(z+iy)\dyad{+}{-}
    +(z-iy)\dyad{-}{+}
    +(1-x)\dyad{-}{-}
  ].
\end{equation}

Let $\bm r'=(x',y',z')$ denote the Bloch vector of the transformed state. Using the block structure of the \cj{} matrix in Eq.~\eqref{eq:free-qubit-choi} and the linearity of $\Lambda$, the components of $\bm r'$ are
\begin{equation}
  \begin{aligned}
    x' &= 1-\eta+\eta x, \\
    y' &= \lambda_R y+\lambda_I z+q_I(1-x), \\
    z' &= -\lambda_I y+\lambda_R z+q_R(1-x),
  \end{aligned}
\end{equation}
where $\lambda=\lambda_R+i\lambda_I$ and $q=q_R+iq_I$. Thus, every single-qubit QST free operation acts on the Bloch vector through the affine map
\begin{equation} \label{eq:free-qubit-affine-map}
  \bm r' =
  \mqty(
    \eta & 0 & 0 \\
    -q_I & \lambda_R & \lambda_I \\
    -q_R & -\lambda_I & \lambda_R
  )
  \bm r +
  \mqty(
    1-\eta \\
    q_I \\
    q_R
  ).
\end{equation}

\begin{figure}
  \centering
  \includegraphics[width=0.4\textwidth]{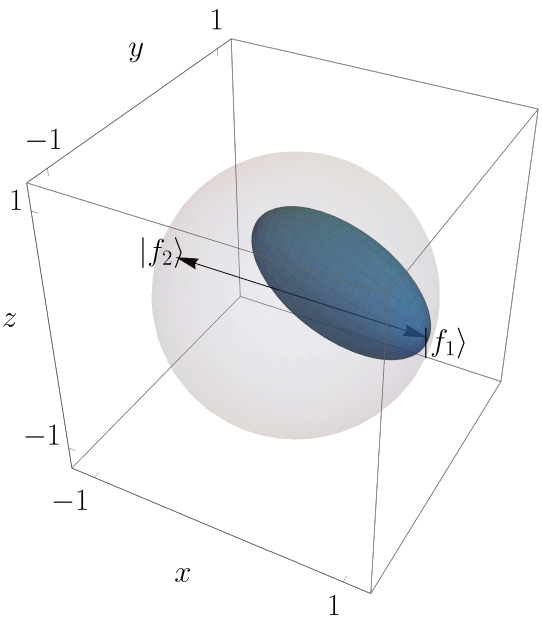}
  \caption{Geometric action of a single-qubit QST free operation; see \Eref{eq:free-qubit-affine-map}. The boundary of the Bloch ball is shown as the outer sphere, while its image under the free operation is the inner ellipsoid. The states $\ket{f_1}=\ket{+}$ and $\ket{f_2}=\ket{-}$ correspond to the Bloch vectors $(1,0,0)^T$ and $(-1,0,0)^T$, respectively. The parameters are $\eta=4/5$, $\lambda=(2/5)e^{7i/10}$, and $q=(1/4)e^{7i/5}$.}
  \label{fig:single-qubit_free-ops}
\end{figure}

To make the geometric action of a free operation explicit, we write $\lambda=\abs{\lambda} e^{-i\theta}$ and define
\begin{equation}
  \bm r_\perp = \mqty(y \\ z),
  \qquad
  \bm q = \mqty(q_I \\ q_R),
  \qquad
  R_\theta =
  \mqty(
    \cos\theta & -\sin\theta \\
    \sin\theta & \cos\theta
  ).
\end{equation}
With these definitions, the action of every single-qubit free operation separates into the transformation of the longitudinal coordinate $x$ and that of the transverse vector $\bm r_\perp$:
\begin{equation}\label{eq:free-qubit-bloch-action}
  x' =1-\eta+\eta x, 
  \qquad
  \bm r_\perp' = 
  \abs{\lambda} R_\theta\bm r_\perp +
  (1-x)\bm q,
\end{equation}
while the complete-positivity conditions in Eq.~\eqref{eq:free-qubit-cp} become
\begin{equation}
  0\leq\eta\leq1,
  \qquad
  0\leq\abs{\lambda}\leq\sqrt{\eta},
  \qquad
  \norm{\bm q}^2 \leq
  (1-\eta)
  \qty(
    \eta-\abs{\lambda}^2
  ).
  \label{eq:free-qubit-geometric-constraints}
\end{equation}

Equations~\eqref{eq:free-qubit-bloch-action} and \eqref{eq:free-qubit-geometric-constraints} provide a geometric characterization of the complete set of free operations. A plane of constant $x$ is mapped into another plane of constant $x'$. Within each such plane, the transverse Bloch vector $\bm r_\perp$ is rotated by $\theta$, contracted isotropically by a factor $\abs{\lambda}$, and translated by $(1-x)\bm q$. The translation vanishes at $x=1$, ensuring that $(1,0,0)^T$ remains fixed. As illustrated in Fig.~\ref{fig:single-qubit_free-ops}, the image of the Bloch ball is therefore an ellipsoid, possibly degenerate, contained in the Bloch ball and containing $f_1$ on its boundary.

The characterization above relies only on the fixed-point condition
$\Lambda(f_1)=f_1$ and complete positivity, and is therefore independent of
the grand sum. We now determine how the grand sum transforms under these
free operations. From \Eref{GS}, for a single qubit,
\begin{equation}
  \Sigma(\varrho)=2\ev{\varrho}{+}=1+x.
  \label{eq:qubit-GS}
\end{equation}
Then, \Eref{eq:free-qubit-bloch-action} gives
\begin{equation}\label{eq:free-qubit-grand-sum}
  \Sigma\qty[\Lambda(\varrho)] = 2(1-\eta) + \eta\Sigma(\varrho).
\end{equation}
Therefore,
\begin{equation}
  \Sigma\qty[\Lambda(\varrho)] - \Sigma(\varrho) =
  (1-\eta)\qty[ 2-\Sigma(\varrho) ]\geq 0,
\end{equation}
showing that the grand sum is non-decreasing under free operations. 

% *}}}
\section{Partial ordering and state convertibility} % {{{* 4
\label{sec:partial-ordering} 
Having characterized the single-qubit QST free operations, we now study the
state transformations they allow and the partial ordering induced by the
grand sum. Different QST monotones may, in general, induce different
orderings; here we focus specifically on the ordering associated with
grand-sum-based quantification. As shown in the previous section, the grand
sum is non-decreasing under free operations. Therefore, a necessary condition
for a free operation $\Lambda$ to transform $\varrho$ into $\tau$ is
\begin{equation}
  \Sigma\qty(\tau)\geq \Sigma(\varrho).
  \label{eq:GS-free-monotonicity}
\end{equation}
We now ask whether this condition is also sufficient. For pure states, we show below that it is; for general mixed states, however, it is not. We establish these two results in this order in the remainder of the section.

\begin{proposition}
\label{prop:pure-state-conversion}
Let
$\varrho=|\psi\rangle\langle\psi|$
and
$\tau=|\phi\rangle\langle\phi|$
be pure single-qubit states. There exists a free operation $\Lambda$ such that $\Lambda(\varrho)=\tau$ if and only if
\begin{equation}
\Sigma(\tau)\geq \Sigma(\varrho).
\label{eq:pure-GS-order}
\end{equation}
\end{proposition}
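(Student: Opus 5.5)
The ``only if'' direction follows from the monotonicity already established: by \Eref{eq:free-qubit-grand-sum}, $\Sigma[\Lambda(\varrho)]-\Sigma(\varrho)=(1-\eta)[2-\Sigma(\varrho)]\ge0$ for every free $\Lambda$, so $\Lambda(\varrho)=\tau$ forces $\Sigma(\tau)\ge\Sigma(\varrho)$. For the ``if'' direction I will construct $\Lambda$ explicitly, using the parametrization \Eref{eq:free-qubit-bloch-action} together with the constraints \Eref{eq:free-qubit-geometric-constraints}. By \Eref{eq:qubit-GS}, $\Sigma=1+x$, so the hypothesis reads $x_\tau\ge x_\varrho$. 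Let $\bm r=(x_\varrho,\bm r_\perp)$ and $\bm s=(x_\tau,\bm s_\perp)$ be the two unit Bloch vectors. Then $\norm{\bm r_\perp}^2=(1-x_\varrho)(1+x_\varrho)$ and $\norm{\bm s_\perp}^2=(1-x_\tau)(1+x_\tau)$.

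\emph{Degenerate case.} If $x_\varrho=1$, then $\varrho=f_1$. The hypothesis forces $x_\tau=1$, hence $\tau=f_1$, and the identity map works.

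\emph{Longitudinal coordinate.} Otherwise, I fix the longitudinal part first. I choose $\eta=(1-x_\tau)/(1-x_\varrho)$, which lies in $[0,1]$ precisely because $x_\tau\ge x_\varrho$. This gives $x'=1-\eta+\eta x_\varrho=x_\tau$.

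\emph{Transverse vector.} It then remains to choose $|\lambda|$, $\theta$ and $\bm q$ so that $|\lambda|R_\theta\bm r_\perp+(1-x_\varrho)\bm q=\bm s_\perp$. I will take $\bm q$ parallel to $R_\theta\bm r_\perp$. Then both terms point along a common direction, and $\theta$ can be chosen to rotate that direction onto $\bm s_\perp$ (if $\bm r_\perp=0$, the $\bm q$ term alone supplies the direction). The problem thus reduces to a scalar one: show that some $l=|\lambda|\in[0,\sqrt\eta]$ satisfies
\begin{equation}
l\,\norm{\bm r_\perp}+(1-x_\varrho)\sqrt{(1-\eta)(\eta-l^2)}\;\ge\;\norm{\bm s_\perp}.
\end{equation}
The left-hand side is the largest transverse length permitted by the CP constraint on $\norm{\bm q}$. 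If this inequality holds, any smaller length down to $0$ is also attainable, by shrinking $\norm{\bm q}$, and continuity in $l$ lets me hit the target exactly.

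\emph{Main step.} The crux is evaluating this maximum over $l$, and I would use Cauchy--Schwarz. Write the left-hand side as the inner product of $(\norm{\bm r_\perp},(1-x_\varrho)\sqrt{1-\eta})$ with $(l,\sqrt{\eta-l^2})$, whose second vector has norm $\sqrt\eta$. Its maximum is therefore
\begin{equation}
\sqrt{\eta}\sqrt{(1-x_\varrho)(1+x_\varrho)+(1-x_\varrho)^2(1-\eta)}.
\end{equation}
Substituting $(1-x_\varrho)(1-\eta)=x_\tau-x_\varrho$, this simplifies to $\sqrt{\eta(1-x_\varrho)(1+x_\tau)}=\sqrt{(1-x_\tau)(1+x_\tau)}=\norm{\bm s_\perp}$. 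So the maximal attainable transverse length equals the target exactly, and the required free operation exists, with complete positivity guaranteed by \Eref{eq:free-qubit-geometric-constraints}.

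The only potential obstacle is this optimization. The naive choices $\bm q=0$ or $\lambda=0$ each fail in general, so both parameters must be used together. The Cauchy--Schwarz step shows the combined choice is tight, which explains why pure states lie exactly on the boundary of what is reachable.
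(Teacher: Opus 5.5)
Your proof is correct, but it takes a different route from the paper's. For sufficiency, the paper does not use the Bloch-ball parametrization of Sec.~\ref{sec:free-qubit-operations} at all. It builds the channel from an isometry $V\ket{f_1}=\ket{f_1}\ket{e_0}$, $V\ket{\psi}=\ket{\phi}\ket{e_1}$, followed by a partial trace over the environment. Such a $V$ exists exactly when one can choose normalized $\ket{e_0},\ket{e_1}$ with $\braket{e_0}{e_1}=\braket{f_1}{\psi}/\braket{f_1}{\phi}$, i.e.\ when $\abs{\braket{f_1}{\phi}}\ge\abs{\braket{f_1}{\psi}}$; the cases $\ket{\psi}=\ket{f_1}$ and $\braket{f_1}{\phi}=0$ are handled separately.

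You instead fix $\eta=(1-x_\tau)/(1-x_\varrho)$ and use Cauchy--Schwarz to show that the largest transverse radius allowed by $\norm{\bm q}^2\le(1-\eta)(\eta-\abs{\lambda}^2)$ is $\sqrt{\eta(1-x_\varrho)(1+x_\tau)}=\sqrt{1-x_\tau^2}$, which is precisely $\norm{\bm s_\perp}$. The algebra checks out. Equality in Cauchy--Schwarz is attainable with $\abs{\lambda}\in[0,\sqrt\eta]$ because both components of $(\norm{\bm r_\perp},(1-x_\varrho)\sqrt{1-\eta})$ are nonnegative. The edge cases $\bm r_\perp=0$, $\bm s_\perp=0$ and $\eta\in\{0,1\}$ go through. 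Since every parameter choice obeying \Eref{eq:free-qubit-geometric-constraints} defines a CPTP map fixing $f_1$, the construction is legitimate.

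The paper's argument is more economical and essentially dimension-independent. It uses nothing but the overlap inequality, and the same isometry construction carries over to qudits with only cosmetic changes. Yours relies on the complete single-qubit characterization.

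In exchange, your route is fully explicit and gives more. Every transverse length in $[0,\sqrt{1-x_\tau^2}]$ is attainable, so a pure input can reach every state, pure or mixed, with $\Sigma(\tau)\ge\Sigma(\varrho)$. The reachable set of a pure state is therefore the entire cap $x\ge x_\varrho$ of the Bloch ball. This generalizes the paper's remark that $\ket{-}$ is universal, and it shows that the obstruction in Proposition~\ref{prop:GS-not-sufficient} can only occur for mixed inputs.
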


\begin{proof} % {{{*
The necessity of Eq.~\eqref{eq:pure-GS-order} follows from the fact that the grand sum is non-decreasing under free operations. We now prove sufficiency. The goal is to construct a free operation explicitly that maps $\varrho$ to $\tau$. We do this by introducing an auxiliary environment, defining an isometry on the system and environment, and then tracing out the environment.

Since $D=2$, the grand sum of the two pure states is
$ \Sigma(\varrho) = 2\abs{\braket{f_1}{\psi}}^2$ and $\Sigma(\tau) = 2\abs{\braket{f_1}{\phi}}^2$, respectively. Therefore, Eq.~\eqref{eq:pure-GS-order} is equivalent to
\begin{equation}
\left|\langle f_1|\phi\rangle\right|
\geq
\left|\langle f_1|\psi\rangle\right|.
\label{eq:pure-overlap-order}
\end{equation}
This inequality says that the final state has at least as much overlap with the free state $f_1$ as the initial state. This is precisely the condition that will allow the following construction.

First, consider two simple cases. If $|\psi\rangle=|f_1\rangle$, then Eq.~\eqref{eq:pure-overlap-order} implies that $|\phi\rangle=|f_1\rangle$ up to a global phase, so the identity channel realizes the transformation. If $\langle f_1|\phi\rangle=0$, then Eq.~\eqref{eq:pure-overlap-order} implies $\langle f_1|\psi\rangle=0$. Since the subspace orthogonal to $|f_1\rangle$ is one-dimensional for $D=2$, the two density operators are equal, $\varrho=\tau$, and the identity channel again realizes the transformation.

Second, we treat the nontrivial case in which $\ket{\psi}\neq\ket{f_1}$ and $\braket{f_1}{\phi}\neq0$. We introduce an auxiliary environment $E$, and define a map $V$ on the two input vectors by
\begin{equation}
\begin{aligned}
V\ket{f_1}
&=
\ket{f_1}\ket{e_0},
\\
V\ket{\psi}
&=
\ket{\phi}\ket{e_1},
\end{aligned}
\label{eq:pure-isometry}
\end{equation}
where $\ket{e_0}$ and $\ket{e_1}$ are normalized states of the environment, to be fixed below.

For $V$ to be an isometry, it must preserve the inner product between the two input vectors. Therefore, we need
\begin{equation}
\mel{f_1}{V^\dagger V}{\psi}
=
\braket{f_1}{\psi}.
\label{eq:isometry-condition}
\end{equation}
Using Eq.~\eqref{eq:pure-isometry}, the left-hand side is
\begin{equation}
\begin{aligned}
\mel{f_1}{V^\dagger V}{\psi}
&=
\left(
\bra{f_1}\bra{e_0}
\right)
\left(
\ket{\phi}\ket{e_1}
\right)
\\
&=
\braket{f_1}{\phi}
\braket{e_0}{e_1}.
\end{aligned}
\label{eq:isometry-inner-product}
\end{equation}
Thus, Eq.~\eqref{eq:isometry-condition} is satisfied if
\begin{equation}
\braket{e_0}{e_1}
=
\frac{\braket{f_1}{\psi}}
{\braket{f_1}{\phi}}.
\label{eq:environment-overlap}
\end{equation}
This choice is possible because Eq.~\eqref{eq:pure-overlap-order} guarantees 
that $\abs{\braket{f_1}{\psi} / \braket{f_1}{\phi}} \leq 1$.
Hence, one can choose normalized environment states $\ket{e_0}$ and $\ket{e_1}$ with the overlap required in Eq.~\eqref{eq:environment-overlap}. With this choice, $V$ preserves the Gram matrix of the two input vectors and can be extended to an isometry from the qubit Hilbert space into the joint system--environment Hilbert space.

The quantum channel we are looking for is then obtained by applying this isometry and discarding the environment:
\begin{equation}
\Lambda(\omega)
=
\operatorname{Tr}_E
\left(
V\omega V^\dagger
\right).
\label{eq:pure-free-channel}
\end{equation}
Since it is obtained from an isometry followed by a partial trace, $\Lambda$ is CPTP. Moreover, Eq.~\eqref{eq:pure-isometry} implies
\begin{equation}
\Lambda(f_1)=f_1,
\qquad
\Lambda(\varrho)=\tau.
\label{eq:pure-free-action}
\end{equation}
Thus, $\Lambda$ is a free operation that realizes the desired transformation from $\varrho$ to $\tau$. Therefore, Eq.~\eqref{eq:pure-GS-order} is sufficient. 
\end{proof} % *}}}

Proposition~\ref{prop:pure-state-conversion} has a simple geometric interpretation. If two single-qubit pure states $\ket{\psi}$ and $\ket{\phi}$ have the same grand sum, $\Sigma(\dyad{\psi})=\Sigma(\dyad{\phi})$, Eq.~\eqref{eq:qubit-GS} places them on the same circle of constant $x$ on the Bloch sphere. A free unitary must leave $|+\rangle$ invariant, and therefore, it has the form
\begin{equation}
U_x(\theta)
=
\exp\left(
-\frac{i\theta}{2}\sigma_x
\right).
\label{eq:free-x-rotation}
\end{equation}
These unitaries are rotations around the $x$ axis, so a suitable $U_x(\theta)$ connects any two pure states with the same grand sum. If instead
$\Sigma(\dyad{\psi})<\Sigma(\dyad{\phi})$, no free unitary can realize the transformation because Eq.~\eqref{eq:free-x-rotation} preserves $x$ and therefore preserves the grand sum. Proposition~\ref{prop:pure-state-conversion} shows that a non-unitary free operation nevertheless exists.

A particularly simple limiting case is the state $\ket{-}$, which is orthogonal to $\ket{f_1}=\ket{+}$ and has the minimum grand sum, $\Sigma=0$. In fact, $\ket{-}$ can be freely transformed into any single-qubit state $\tau$, pure or mixed. This follows from the measure-and-prepare channel
\begin{equation}
  \Lambda(\varrho) =
  \bra{+}\varrho\ket{+}\,f_1 +
  \bra{-}\varrho\ket{-}\,\tau, 
  \label{eq:minus-universal-channel}
\end{equation}
which is CPTP, satisfies $\Lambda(f_1)=f_1$, and therefore is free, while $\Lambda(\dyad{-})=\tau$. Thus, the state of maximum texture can reach any point of the Bloch ball through a free operation.

This special case, however, does not extend to arbitrary input states. Once general single-qubit states are considered, the grand sum remains a necessary condition for free conversion but is no longer sufficient to determine whether a transformation is possible.

\begin{proposition}
\label{prop:GS-not-sufficient}
There exist single-qubit states $\varrho$ and $\tau$ such that
\begin{equation}
\Sigma(\tau)>\Sigma(\varrho),
\label{eq:counterexample-GS-order}
\end{equation}
but no free operation $\Lambda$ satisfies $\Lambda(\varrho)=\tau$.
\end{proposition}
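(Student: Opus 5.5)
The plan is to exhibit an explicit pair of states and use the complete characterization of single-qubit free operations, Eqs.~\eqref{eq:free-qubit-bloch-action} and \eqref{eq:free-qubit-geometric-constraints}, to show that every free image of $\varrho$ differs from $\tau$. The natural candidate for $\varrho$ is the maximally mixed state, $\varrho=\mathds{1}/2$, with Bloch vector $\bm r=(0,0,0)$ and $\Sigma(\varrho)=1$ by \Eref{eq:qubit-GS}. For $\tau$ we take a \emph{pure} state other than $f_1$ with larger grand sum. A concrete choice is the Bloch vector $(1/2,\sqrt{3}/2,0)$, which gives $\Sigma(\tau)=3/2>1$, so \Eref{eq:counterexample-GS-order} holds. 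The intuition is that the maximally mixed state has no transverse structure left: any free operation can only create transverse components through the translation term $(1-x)\bm q$. That term is too tightly bounded by complete positivity to reach the surface of the Bloch sphere.

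First, we apply \Eref{eq:free-qubit-bloch-action} with $x=0$ and $\bm r_\perp=0$. The image of $\varrho$ under any free $\Lambda$ has
\begin{equation*}
x'=1-\eta,\qquad \bm r_\perp'=\bm q .
\end{equation*}
Using \Eref{eq:free-qubit-geometric-constraints}, we then have $\norm{\bm q}^2\le(1-\eta)(\eta-\abs{\lambda}^2)\le\eta(1-\eta)$. Consequently, the squared length of the output Bloch vector is
\begin{equation*}
(1-\eta)^2+\norm{\bm q}^2\le(1-\eta)^2+\eta(1-\eta)=1-\eta .
\end{equation*}

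Next, we argue that a pure output is reached only when $\eta=0$. If $\Lambda(\varrho)$ is pure, its Bloch vector has unit length, so the bound above forces $1-\eta\ge1$, i.e.\ $\eta=0$. Then $x'=1$, and $\Lambda(\varrho)=f_1$. Hence the only pure state reachable from $\mathds{1}/2$ by a free operation is $f_1$ itself. Our chosen $\tau$ has $x=1/2\neq1$, so it is not reachable, even though $\Sigma(\tau)>\Sigma(\varrho)$. This completes the argument.

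The only delicate step is the first one: making sure the full constraint set \eqref{eq:free-qubit-geometric-constraints} is used, and not just $\abs{\lambda}\le\sqrt\eta$. The decisive inequality is $\norm{\bm q}^2\le\eta(1-\eta)$, obtained by dropping $\abs{\lambda}^2\ge0$. Once that is in place, everything reduces to the one-line comparison $1-\eta<1$ for $\eta>0$. The same argument in fact shows more: every free image of $\mathds{1}/2$ lies in the ball of radius $\sqrt{1-\eta}$. So an entire family of pure (and nearly pure) targets with $\Sigma>1$ yields counterexamples.
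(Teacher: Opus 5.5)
Your proof is correct, but it takes a different route from the paper's. The paper does not use the Choi-matrix characterization at all. It uses the same input $\mathds{1}/2$ with the pure targets $\tau_a$ of Bloch vector $(a,\sqrt{1-a^2},0)$, and argues only from contractivity of the trace distance under CPTP maps together with $\Lambda(f_1)=f_1$: $d_{\mathrm{tr}}(\tau_a,f_1)=\sqrt{(1-a)/2}>1/2=d_{\mathrm{tr}}(\mathds{1}/2,f_1)$ for $0<a<1/2$. That argument is independent of \Sref{sec:free-qubit-operations} and works verbatim in any dimension, but it is weaker. Your target is exactly the boundary case $a=1/2$, where the two distances coincide, so contractivity alone could not exclude it. Your argument instead uses the complete-positivity constraint $\norm{\bm q}^2\le(1-\eta)(\eta-\abs{\lambda}^2)$, which carries more information than contractivity, and it yields the exact reachable set. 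Since $x'=1-\eta$, your bound reads $\norm{\bm r'}^2\le x'$, i.e.\ the free images of $\mathds{1}/2$ fill the ball of radius $1/2$ centered at $(1/2,0,0)$ (every point is attained with $\lambda=0$). This ball lies strictly inside the radius-$1$ ball around $f_1$ that contractivity permits. This channel-independent form is also a cleaner way to state your closing remark than a ball of radius $\sqrt{1-\eta}$, whose radius depends on $\Lambda$. It excludes every pure target other than $f_1$, hence all $0<a<1$. The price is that your proof rests on the qubit Choi-matrix derivation and does not extend beyond $D=2$ without an analogous characterization.
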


\begin{proof} % {{{*
We prove the statement by constructing a pair of states for which the grand sum increases, but no free operation can transform the first state into the second. The pair is chosen so that the grand sum increases, and therefore the monotonicity of the grand sum does not rule out the transformation. We then show that the transformation is nevertheless impossible, because it would increase the trace distance from the fixed state $f_1$, which no free operation can do. Consider the maximally mixed state
\begin{equation}
\varrho_0
=
\frac{\mathds{1}}{2}
\label{eq:mixed-counterexample-input}
\end{equation}
and the family of pure states
\begin{equation}
\tau_a
=
\frac{1}{2}
\left(
\mathds{1}
+a\sigma_x
+\sqrt{1-a^2}\,\sigma_y
\right),
\qquad
0<a<\frac{1}{2}.
\label{eq:mixed-counterexample-output}
\end{equation}
Using Eq.~\eqref{eq:qubit-GS}, their grand sums are
\begin{equation}
\Sigma(\varrho_0)=1,
\qquad
\Sigma(\tau_a)=1+a.
\label{eq:mixed-counterexample-GS}
\end{equation}
Hence, $\Sigma(\varrho_0)<\Sigma(\tau_a)$, so the monotonicity of the grand sum does not rule out the transformation $\varrho_0\mapsto\tau_a$.

Assume, for contradiction, that a free operation $\Lambda$ maps $\varrho_0$ to $\tau_a$. We use the trace distance, defined for two states $\omega$ and $\chi$ as
\begin{equation}
d_{\mathrm{tr}}(\omega,\chi)
=
\frac{1}{2}
\norm{\omega-\chi}_1,
\label{eq:trace-distance-definition}
\end{equation}
which is contractive under CPTP maps~\cite{watrous_2018_theory}:
\begin{equation}
d_{\mathrm{tr}}
\bigl(
\Lambda(\omega),
\Lambda(\chi)
\bigr)
\leq
d_{\mathrm{tr}}(\omega,\chi).
\label{eq:trace-distance-contractivity}
\end{equation}
Since $\Lambda$ is assumed to be free, $\Lambda(f_1)=f_1$. Together with the assumed transformation $\Lambda(\varrho_0)=\tau_a$, Eq.~\eqref{eq:trace-distance-contractivity} gives
\begin{equation}
\begin{aligned}
d_{\mathrm{tr}}(\tau_a,f_1)
% &=
% d_{\mathrm{tr}}
% \bigl(
% \Lambda(\varrho_0),
% \Lambda(f_1)
% \bigr)
% \\
&\leq
d_{\mathrm{tr}}(\varrho_0,f_1).
\end{aligned}
\label{eq:counterexample-contractivity}
\end{equation}

For qubit states $\omega$ and $\chi$ with Bloch vectors $\bm r_\omega$ and $\bm r_\chi$, the trace distance can be written as
\begin{equation}
d_{\mathrm{tr}}(\omega,\chi)
=
\frac{1}{2}
\norm{
\bm r_\omega-\bm r_\chi
}_2.
\label{eq:trace-distance-bloch}
\end{equation}
Thus, the trace distance between two qubit states is simply the Euclidean distance between their Bloch vectors multiplied by a factor of $\frac{1}{2}$. The Bloch vectors of the three states involved are
\begin{equation}
\bm r_{\varrho_0}=(0,0,0),
\qquad
\bm r_{f_1}=(1,0,0),
\qquad
\bm r_{\tau_a}
=
\left(
a,
\sqrt{1-a^2},
0
\right).
\label{eq:counterexample-bloch-vectors}
\end{equation}
Equation~\eqref{eq:trace-distance-bloch} then gives
\begin{equation}
d_{\mathrm{tr}}(\varrho_0,f_1)
=
\frac{1}{2},
\label{eq:counterexample-distance-input}
\end{equation}
whereas
\begin{equation}
d_{\mathrm{tr}}(\tau_a,f_1)
=
\sqrt{\frac{1-a}{2}}
>
\frac{1}{2},
\qquad
0<a<\frac{1}{2}.
\label{eq:counterexample-distance-output}
\end{equation}
Equations~\eqref{eq:counterexample-contractivity}, \eqref{eq:counterexample-distance-input}, and \eqref{eq:counterexample-distance-output} contradict each other. Hence, no free operation can map $\varrho_0$ to $\tau_a$, even though Eq.~\eqref{eq:mixed-counterexample-GS} gives $\Sigma(\varrho_0)<\Sigma(\tau_a)$.
\end{proof} % *}}}

The same pair makes the resulting partial ordering explicit. Proposition~\ref{prop:GS-not-sufficient} excludes the transformation $\varrho_0\mapsto\tau_a$, while Eq.~\eqref{eq:GS-free-monotonicity} excludes the reverse transformation $\tau_a\mapsto\varrho_0$, since it would decrease the grand sum. Thus, $\varrho_0$ and $\tau_a$ are incomparable under free operations of QST.

For pure single-qubit states, free convertibility is therefore completely determined by the grand sum, as established in Proposition~\ref{prop:pure-state-conversion}, but this characterization does not extend to the full Bloch ball. For arbitrary states, the grand sum gives a necessary but not sufficient condition for free conversion. Consequently, free convertibility is not a total relation on the state space; after identifying states that can be freely converted into one another, it gives the partial ordering considered here.

After completing this work, we became aware of Ref.~\cite{lin_2026_quantumstate}, whose Sec.~III.B establishes necessary and sufficient conditions for deterministic conversion between arbitrary qubit states under QST free operations. Both the pure-state conversion criterion and the insufficiency of grand-sum monotonicity for general states established in this section follow from their theorem. Our discussion provides alternative proofs and an explicit example of incomparable states, emphasizing the resulting partial ordering under
free convertibility.
% *}}}
\section{A survey on free operations} % {{{* 5
\label{sec:survey-free-operations}

In this section, we develop a compendium of sets of quantum maps that preserve, diminish, or completely deplete the amount of texture in quantum states, and provide several explicit examples involving quantum channels on systems of arbitrary dimension $D$. For simplicity of notation, throughout this section, all indices associated with the standard computational basis are taken to range from $0$ to $D-1$.

It is a well-known fact that a complete description of arbitrary density matrices associated with high-dimensional quantum systems may be carried out by employing the set of Hermitian Gell-Mann operators and their generalizations. Nevertheless, it is possible to construct a more compact and intuitive picture if we recall the generalization of the Bloch sphere representation in terms of the so-called Weyl operators, $W_{m,n}$~\cite{bertlmann2008bloch,kibler_2008_variations,siewert2022orthogonal}. Under this framework, an arbitrary density matrix $\varrho$ associated with a $D$-dimensional quantum system may be expressed as
\begin{equation}
\varrho = \frac{1}{D}\sum_{m,n=0}^{D-1}\beta_{m,n} ~W_{m,n},
\label{WeylRep}
\end{equation}
where the state coefficients are in general complex and satisfy the relations $\beta_{m,n}^*=\omega_D^{-mn}\beta_{-m,-n}$ and $\beta_{0,0}=1$ so that the density matrix $\varrho$ satisfies the Hermiticity and unit-trace conditions, respectively. Moreover, the operators $W_{m,n}$ are unitary, constitute a basis for the Hilbert--Schmidt space, and can be written in a compact form as
\begin{equation}
W_{m,n}=\sum_{j=0}^{D-1}\omega_D^{jm}\dyad{j}{j\oplus n},
\end{equation}
where the symbol $\oplus$ indicates sum modulo $D$. Moreover, note that for the two-level case ($D$=2), the set of four Weyl operators is equivalent to the Pauli matrices plus the identity.

Despite the fact that the $W_{m,n}$ operators are not Hermitian, the Weyl representation (Eq. \ref{WeylRep}) has certain advantages. For example, in the specific context of QST, the grand sum of an arbitrary state $\varrho$ (Eq. \ref{GS}) is simply reduced to
\begin{equation}
\Sigma(\varrho)=\sum_{k,l=0}^{D-1}\bra{k}\varrho\ket{l} =\frac{1}{D}\sum_{jklmn}\omega^{jm}\beta_{m,n}\delta_{j,k}\delta_{l,j\oplus n}=\sum_{n=0}^{D-1}\beta_{0,n}=1+\sum_{n=1}^{D-1}\beta_{0,n}.
\end{equation}

Hence, $\Sigma(\varrho)$ depends only on the coefficients associated with the group of Weyl operators that generalizes the Pauli $\sigma_x$ matrix; in other words, the \textit{flip} operators $W_{0,n}=\sum_j\dyad{j}{j\oplus n}$. In this way, we can define domains in the space of states whose density matrices possess the same amount of texture. 
%For instance, in the Bloch sphere representation, the grand sum associated to any state on the plane defined by a fixed $x$ component of the Bloch vector, attains the same value $\Sigma$ [recall \Eref{eq:qubit-GS}]. Example provided in the next paragraph:
These regions will be very helpful in the characterization of the free operations. 

In addition, for the case of qubits ($D=2$), \Eref{WeylRep} is equivalent to the Bloch sphere representation [\Eref{eq:qubit-bloch}] and the grand sum reduces to $\Sigma(\varrho)=1+x$ [\Eref{eq:qubit-GS}]. Thus, any quantum channel whose action on an arbitrary density matrix $\varrho$ is to displace its Bloch vector within the circular plane defined by the normal vector $(1,0,0)$, passing through the point $(x,0,0)$ is a free operation that \textit{preserves} the amount of texture of the quantum state. On the other hand, quantum channels whose effect results in an increase in $x$ are \textit{texture-decreasing} free operations. Finally, notice that channels decreasing the value of $x$, although physically valid, do not correspond to free operations in the resource theory of quantum texture.

\subsection{Texture-preserving operations}
According to the previous observation, any quantum channel $\mathcal{E}$,
\begin{equation}
\varrho \to \varrho'=\mathcal{E}[\varrho]=\frac{1}{D}\sum_{m,n=0}^{D-1}\beta'_{m,n} ~W_{m,n},
\end{equation}
that is expected to preserve the amount of texture in a quantum state must satisfy the general condition
\begin{equation}
\label{Eq_Cond_Pres_Op_Gen}
\sum_{n=0}^{D-1}\beta'_{0,n}=\sum_{n=0}^{D-1}\beta_{0,n}.
\end{equation}
for any density matrix $\varrho$.
Nevertheless, in the examples presented below, the constraints on the Weyl coefficients are more restrictive, as they are required to remain invariant:
\begin{equation}
\label{Eq_Cond_Pres_Op}
\beta'_{0,n}=\beta_{0,n}, ~~~~~ n=0,\cdots D-1.
\end{equation}

Among the physically admissible texture preserving quantum operations, it is possible to recognize two important instances involving specific classes of unitary operations and unital channels.
\subsubsection{Unitary Operations}
Recalling, the grand sum of an arbitrary density matrix $\varrho$ may be written alternatively as $\Sigma(\varrho)=D\bra{f_1}\varrho\ket{f_1}$. Thus, any $D\times D$ unitary $U$,
% %
% \[
% U = e^{i\delta} \begin{pmatrix}
% 1 & 0~  \dots ~ 0 \\
% 0 &  \\
% \vdots & \Tilde{U}_{D-1,D-1} \\
% 0 &
% \end{pmatrix}_f,
% \]
%
$$U = e^{i\delta}
\left(
\begin{array}{c|ccc}
  1 & 0 &  \cdots & 0  \\
  \hline
  0 &  & \\
  \vdots &  & \Tilde{U}_{D-1,D-1}\\
  0 &
\end{array}
\right)_f
$$
written in the Fourier basis $\left\{\ket{f_1},\ket{f_2},\cdots,\ket{f_{D}}\right\}$, is a free operation, where $\Tilde{U}_{D-1,D-1}$, is a unitary, acting only on the subspace of the Hilbert space spanned by $\left\{\ket{f_2},\cdots,\ket{f_{D}}\right\}$, with $\delta$ an arbitrary real phase. For the simplest case $(D=2)$, and after some calculations it is possible to show that any unitary in the set of allowed operations may be written in the standard computational basis as

\[
U = e^{i\phi} \begin{pmatrix}
\cos\varphi/2& i\sin\varphi/2\\
i\sin\varphi/2 & \cos\varphi/2
\end{pmatrix}_c.
\]

In other words, arbitrary rotations around the $x$ axis of the Bloch sphere, as expected, where $\phi$ and $\varphi$ are real phases.

% \[
% \hat{U} = e^{i\delta} \begin{pmatrix}
% 1 & 0\\
% 0 &  e^{-i\varphi}
% \end{pmatrix}_f
% \]
% %
% $$\hat{U}=e^{i\varphi} \left(e^{i\delta}\dyad{f_1} + e^{-i\delta}\dyad{f_2}\right)$$
%

% given a Bloch vector $\vec{r}=(r_x,r_y,r_z)$,

% For $d=2$, we have
% $$d=2$$\\

\subsection{Unital Channels}

Now let us introduce two families of texture-preserving unital quantum channels i.e. CPTP maps that leave the maximally mixed state invariant $\mathcal{E}\left[\mathds{1}_{D}/D\right]=\mathds{1}_{D}/D$. Usually, these kinds of channels are employed to model the effect of noise on quantum systems and non dissipative interactions with the environment \cite{Nielsen2010}.

\subsubsection{Weyl Channels}

Consider a physical process acting on a quantum system with density matrix $\varrho $. This process can be effectively characterized by the attenuation and/or phase shifting of its Weyl coefficients: $\beta_{m,n} \to \beta'_{m,n} = \tau_{m,n}\beta_{m,n}$, i.e.
\begin{equation}
\label{Cond_Weyl_Chann}
\varrho \to \varrho' = \frac{1}{D}\sum_{m,n=0}^{D-1} \tau_{m,n}  \beta_{m,n} W_{m,n}.
\end{equation}
For more details, constraints on the coefficients\footnote{Trace and Hermiticity preservation imply $\tau_{0,0}=1$ and $\tau_{m,n}=\tau^*_{-m,-n}$, respectively. In addition, the CPTP condition of the map implies that the following $D^2$ inequalities must be fulfilled: $\sum_{m,n=0}^{D-1}\tau_{m,n}~ \omega_D^{kn-lm}\geq 0$, for $k,l = 0,\ldots,D-1$ \cite{Basile2024}.} $\tau_{m,n}$, and a characterization of the multipartite case, we refer the interested reader to Ref.~\cite{Basile2024}.

The grand sum for a density matrix after the action of a Weyl channel reads
\begin{equation}
\Sigma(\varrho')=\sum_{n=0}^{D-1}\tau_{0,n}\beta_{0,n}=1+\sum_{n=1}^{D-1}\tau_{0,n}\beta_{0,n}.
\end{equation}
Hence, apart from the trivial case (the identity channel, e.g. $\tau_{m,n}=1$), there exists another map in agreement with the constraint in Eq. \ref{Eq_Cond_Pres_Op}. It is a channel which erases all Weyl coefficients but those associated to the operators $W_{0,n}$, for $n=0,\cdots,D-1$, in other words $\tau_{m,n}=\delta_{m,0}$. For a qubit, this is a map that projects any quantum state to the $x$ axis in the Bloch sphere.

\subsubsection{Flip channels}
Another interesting example of operations which preserve the amount of texture in a quantum state is the generalization of the bit flip channel. For the simplest case\footnote{For the sake of simplicity, we consider the probability of the system to jump one level up is equal to jump two, and so on.}, consider a $D$-level system, prepared in the state $\ket{k}$. Thus, the effect of this channel is to flip the state either to $\ket{k\oplus 1}$, $\ket{k\oplus 2}$, $\cdots$, or $\ket{k\oplus D-1}$, with probability $p/(D-1)$, each. This channel can be written in terms of Kraus operators, $K_m$ as \cite{Fonseca2019}:
\begin{equation}
\varrho \to \varrho' = \sum_{m=0}^{D-1} K_m \varrho K_m^{\dagger},
\end{equation}
where
\begin{equation}
K_0=\sqrt{1-p} ~ \mathds{1}, ~~~ K_m=\sqrt{\frac{p}{D-1}}W_{0,m},
~~~ m=1, \cdots D-1.
\end{equation}
It is possible to write this channel in the form of Eq. \ref{Cond_Weyl_Chann}, in other words, finding the coefficients $\tau_{m,n}$. After some calculations, we have
\begin{equation}
\tau_{0,n}=1, ~~~ n=0,\cdots, D-1,
\end{equation}
and
\begin{equation}
\tau_{j,n}=\frac{D(1-p)-1}{D-1}, ~~~ j=1,\cdots, D-1.
\end{equation}
Note that as in the previous example, given that $\tau_{0,n}=1$, then the grand sum remains invariant, as expected. However, it is important to note that in this case, the other coefficients not necessarily vanish.

% $$\varrho \to \varrho' = (1-p)\varrho + p~ \hat{\sigma}_x \varrho \hat{\sigma}_x$$

\subsection{Texture decreasing operations } % {{{* 
In this subsection, we focus our attention on quantum operations $\Lambda$, that diminish the amount of texture (or increase the grand sum) for any initial state. Recalling the expression for the grand sum in terms of the Weyl coefficients [\Eref{Eq_Cond_Pres_Op_Gen}], we have
\begin{equation}
\label{Eq_Cond_Dec_Op_Gen}
\Sigma(\varrho')=\sum_{n=0}^{D-1}\beta'_{0,n}\geq\sum_{n=0}^{D-1}\beta_{0,n}=\Sigma(\varrho),
\end{equation}
for any density matrix $\varrho$. Note that for a texture decreasing operation $\Lambda$, the equality is attained only when the density matrix $\varrho ~(=\varrho')$ corresponds to the textureless state $\dyad{f_1}$, for any valid texture decreasing operation $\Lambda$. In other words, we observe that, as expected, the textureless state is a fixed point for any map within the set.

\subsubsection{Amplitude damping channels}
A paradigmatic example of quantum map with a fixed point is the amplitude damping channel, which is usually employed to model dissipative processes, spontaneous emission, among others in two level quantum systems \cite{Nielsen2010}. High dimensional generalizations have been proposed \cite{Dutta16}, and more recently, these models have been generalized to cover cases in which the system presents unequal transition probabilities between the levels \cite{PhysRevA.102.012401,Chessa2023resonantmultilevel}.

Motivated by these generalizations of amplitude damping channels, it is possible to introduce a class of operations which reduce the amount of texture in any quantum state, if we employ the textureless state $\ket{f_1}$, as the fixed point. In this way, it is not difficult to note that for the simplest case (equal decaying probabilities, $p$), the associated Kraus operators may be written as
%
% $$\varrho \to \varrho' = \sum_{m=1}^{d} \hat{K}_m \varrho \hat{K}_m^{\dagger}$$
%
\begin{equation}
 \hat{K}_1=\dyad{f_1}+\sqrt{1-p}\sum_{m=2}^D\dyad{f_m}
\end{equation}
and
\begin{equation}
\hat{K}_m=\sqrt{p}\dyad{f_1}{f_m}, ~~~~~ m=2,\cdots,D.
\end{equation}

After some steps, the grand sum is reduced to
\begin{equation}
\Sigma(\varrho')=Dp+\left(1-p\right)\Sigma(\varrho).
\end{equation}
Therefore, this is a map that increases the grand sum (and consequently decreases the texture), attaining its maximum for $p=1$, $\Sigma(\varrho')=D$, as expected. In addition, note that the latter constitutes an example of a texture destroying operation.

\subsubsection{Generalized amplitude damping channels}
A more realistic situation is one in which the decay probabilities are not equal. For this, the Kraus operators read
\begin{equation}
 \hat{K}_1=\dyad{f_1}+\sum_{m=2}^D\sqrt{1-p_m}\dyad{f_m}
\end{equation}
and
\begin{equation}
\hat{K}_m=\sqrt{p_m}\dyad{f_1}{f_m}, ~~~~~ m=2,\cdots,D,
\end{equation}
where $p_m$ is the probability of the system to decay from $\ket{f_m}$ to the target state $\ket{f_1}$.
In this case, we have
\begin{equation}
\Sigma(\varrho')=\Sigma(\varrho)+D\sum_{m=2}^Dp_m\expval{\varrho}{f_m},
\end{equation}
where we observe, once again, an increase in the value of the grand sum.

% *}}}
% *}}}
\section{Bipartite systems} % {{{* 6
\label{sec:bipartite-systems}
% Intro{{{*
Taking the partial trace over a state of a composite system is arguably the most relevant coarse graining map. Let us investigate the behavior of QST, or, equivalently, the grand sum of reduced density operators in bipartite systems of dimension $D=D_1D_2$, where $D_1$ and $D_2$ are the dimensions of the subsystems. In subsection.~\ref{sec:bounds_global}, we start by establishing bounds involving the grand sums of the full system state and the grand sums of the reduced systems, for a fixed basis. In subsection \ref{sec:bounds_distillable} we find  bounds involving the minimum and maximum values attained by the distillable texture across all bases, which enables us to determine an inequality satisfied by a new purity measure, defined in Ref.~\cite{aditi}.

% *}}}
\subsection{Bounds on the global and reduced grand sums}\label{sec:bounds_global} % {{{*
We begin by deriving upper and lower bounds for $\Sigma(\varrho)$ in terms of the grand sums of its reduced states, which refine the trivial range $0\leq\Sigma(\varrho)\leq D=D_1D_2$.
We start by rewriting the grand sum of an arbitrary state $\varrho$ as
\begin{equation}\label{eq:GS_bipartite}
  \Sigma(\varrho) = 
    D_1 D_2 \Tr(\varrho \dyad{f_1^{(D_1)}} \otimes \dyad{f_1^{(D_2)}}).
\end{equation}
Similarly, for the first subsystem:
\begin{equation}\label{eq:GS_subsystem1}
\begin{split}
  \Sigma(\varrho_1) &= D_1 \Tr(\varrho_1 \dyad{f_1^{(D_1)}})= D_1 \Tr(\varrho \dyad{f_1^{(D_1)}} \otimes \mathbb{I}_{D_2}),
\end{split}
\end{equation}
where $\varrho_1$ denotes the reduced density matrix of the first subsystem and $\mathbb{I}_{D_2}$ the identity operator of dimension $D_2$ acting over the second subsystem.

% {\color{red} % Old {{{*
% Next, we multiply the reduced grand sum in \Eref{eq:GS_subsystem1} by $D_2$, obtaining $D_2 \Sigma(\varrho_1) = D_1 D_2 \Tr\qty(\varrho \dyad{f_1^{D_1}} \otimes \mathbb{I}_{D_2})$. We then take the difference between this and the global grand sum in \Eref{eq:GS_bipartite}:
% %
% \begin{equation}
% \begin{split}
%   D_2 \Sigma(\varrho_1) - \Sigma(\varrho) 
%     &= D_1 D_2 \Tr(\varrho \dyad{f_1^{(D_1)}} \otimes \mathbb{I}) - D_1 D_2 \Tr(\varrho \dyad{f_1^{(D_1)}} \otimes \dyad{f_1^{(D_2)}}) \\
%   %    
%     &= D_1 D_2 \Tr \bigg[\varrho \dyad{f_1^{(D_1)}} 
%      \otimes \pqty{\mathbb{I} - \dyad{f_1^{(D_2)}}} \bigg].
% \end{split}
% \end{equation}
% %
% Both operators $\dyad{f_1^{(D_1)}}$ and $\pqty{\mathbb{I} - \dyad{f_1^{(D_2)}}}$ are projectors.  
% The trace of the product of two positive semi-definite matrices is always non-negative. Consequently, $ D_2 \Sigma(\varrho_1) - \Sigma(\varrho) \geq 0$, or $\Sigma(\varrho)\le D_2 \Sigma(\varrho_1)$.
% Of course the second relation, $ \Sigma(\varrho)\le D_1 \Sigma(\varrho_2)$
% is also valid. Therefore, we have
% %
% \begin{equation}\label{eq:ineq:2}
% \Sigma(\varrho)\le {\rm min}\left\{D_2 \Sigma(\varrho_1),D_1 \Sigma(\varrho_2)\right\}.
% \end{equation}
% } % *}}}

An upper bound for the grand sum in \Eref{eq:GS_bipartite} can be obtained as follows. First, consider the projector onto the tensor product of the textureless subspace of the first subsystem and the subspace orthogonal to the textureless state of the second subsystem. Since both $\varrho$ and this projector are positive semidefinite,
\begin{equation}
\begin{split}
  0 &\leq \Tr\qty[
  \varrho\dyad{f_1^{(D_1)}}\otimes
  \qty(\mathbb{I}-\dyad{f_1^{(D_2)}})
  ] \\
  &= \frac{\Sigma(\varrho_1)}{D_1}
    -\frac{\Sigma(\varrho)}{D}.
\end{split}
\end{equation}
This implies
\begin{equation}
  \Sigma(\varrho)\leq D_2\Sigma(\varrho_1).
\end{equation}
Second, exchanging the roles of the two subsystems gives
$\Sigma(\varrho)\leq D_1\Sigma(\varrho_2)$. Combining these two inequalities, we obtain the upper bound
\begin{equation}\label{eq:ineq:2}
\Sigma(\varrho)\le {\rm min}\left\{D_2 \Sigma(\varrho_1),D_1 \Sigma(\varrho_2)\right\}.
\end{equation}
In terms of rugosity, see \Eref{rug}, this becomes $\mathfrak{R}(\varrho)\ge {\rm max}\left\{\mathfrak{R}(\varrho_1),\mathfrak{R}(\varrho_2)\right\}$.
Since the only assumption made about the global state $\varrho$ is its positive semi-definiteness, this constraint is completely general. An interesting consequence of (\ref{eq:ineq:2}) is that there is no global state with non-zero grand sum and one of the reduced systems with a state in the orthogonal support of $f_1$. 

The inequality in \Eref{eq:ineq:2} naturally leads one to consider whether the stronger relation $\Sigma(\varrho)\le^{?} \Sigma(\varrho_1)\Sigma(\varrho_2)$ holds. 
% Since $\Sigma(\varrho_1)\Sigma(\varrho_2)\leq \min\left\{D_2\Sigma(\varrho_1),D_1\Sigma(\varrho_2)\right\}$, \Eref{eq:ineq:2} naturally leads one to ask whether the stronger relation $\Sigma(\varrho)\le^{?}\Sigma(\varrho_1)\Sigma(\varrho_2)$ holds.
The answer is negative, as the following counterexample shows. Consider $|\psi\rangle=a|00\rangle+b|11\rangle$, with $a$ real and non-negative. We readily obtain $\Sigma(\varrho)-\Sigma(\varrho_1)\Sigma(\varrho_2)=2a{\rm Re}(b)$, which can be positive, negative, or zero.

% The meaning of these results become even clearer when expressed in terms of rugosity. 
% %
% \begin{equation}\label{eq:ineq:3}
% \mathfrak{R}(\varrho)\ge {\rm max}\left\{\mathfrak{R}(\varrho_1),\mathfrak{R}(\varrho_2)\right\}.
% \end{equation}

A lower bound for the grand sum in \Eref{eq:GS_bipartite} can be obtained similarly. First, the positive semidefiniteness of $\varrho$ implies that $\Sigma(\varrho)\geq 0$. Second, consider the projector onto the tensor product of the local subspaces orthogonal to the textureless states. Since both $\varrho$ and this projector are positive semidefinite,
\begin{equation}
\begin{split}
  0 &\leq \Tr\qty[
  \varrho\qty(\mathbb{I}-\dyad{f_1^{(D_1)}})\otimes
  \qty(\mathbb{I}-\dyad{f_1^{(D_2)}})
  ] \\
  &= 1-\frac{\Sigma(\varrho_1)}{D_1}
    -\frac{\Sigma(\varrho_2)}{D_2}
    +\frac{\Sigma(\varrho)}{D}.
\end{split}
\end{equation}
This implies
\begin{equation}
  \Sigma(\varrho)
  \geq D_2\Sigma(\varrho_1)
    + D_1\Sigma(\varrho_2)
    - D.
\end{equation}
Combining these two inequalities, we obtain the lower bound
\begin{equation}
  \label{eq:ineq:lower}
  \max\qty{
    0,\,
    D_2\Sigma(\varrho_1)
    + D_1\Sigma(\varrho_2)
    - D
  } \leq \Sigma(\varrho).
\end{equation}

Equations \eqref{eq:ineq:2} and \eqref{eq:ineq:lower} can also be used to obtain bounds in both directions: bounds on the product of the subsystem grand sums at fixed global grand sum and, conversely, bounds on the global grand sum at fixed product. We begin with the former. Multiplying the two inequalities $\Sigma(\varrho)\leq D_2\Sigma(\varrho_1)$ and $\Sigma(\varrho)\leq D_1\Sigma(\varrho_2)$, found in \Eref{eq:ineq:2}, gives the lower bound
\begin{equation}
\label{eq:lower_bound}
  \Sigma(\varrho_1)\Sigma(\varrho_2)
  \geq
  \frac{\Sigma(\varrho)^2}{D}.
\end{equation}
An upper bound on the same product follows directly from \Eref{eq:ineq:lower}. Before taking the maximum with zero, the inequality can be written as
\begin{equation}
  D_2\Sigma(\varrho_1)
  +
  D_1\Sigma(\varrho_2)
  \leq
  D+\Sigma(\varrho).
\end{equation}
By the arithmetic--geometric mean inequality,
\begin{equation}
  2\sqrt{
    D_2\Sigma(\varrho_1)
    D_1\Sigma(\varrho_2)
  }
  \leq
  D_2\Sigma(\varrho_1)
  +
  D_1\Sigma(\varrho_2)
  \leq
  D+\Sigma(\varrho).
\end{equation}
Squaring both sides gives the upper bound
\begin{equation}
\label{eq:upper_bound}
  \Sigma(\varrho_1)\Sigma(\varrho_2)
  \leq
  \frac{
    \qty[D+\Sigma(\varrho)]^2
  }{4D}.
\end{equation}
Consequently, the exact window for the product of the subsystem grand sums at fixed global grand sum is
\begin{equation}
\label{eq:product_bounds}
  \frac{\Sigma(\varrho)^2}{D}
  \leq
  \Sigma(\varrho_1)\Sigma(\varrho_2)
  \leq
  \frac{
    \qty[D+\Sigma(\varrho)]^2
  }{4D}.
\end{equation}
Conversely, these inequalities may be expressed as bounds on the global grand sum:
\begin{equation}
\label{eq:global_product_bounds}
  \max\qty{
    0,\,
    2\sqrt{
      D\Sigma(\varrho_1)\Sigma(\varrho_2)
    }
    -
    D
  }
  \leq
  \Sigma(\varrho)
  \leq
  \sqrt{
    D\Sigma(\varrho_1)\Sigma(\varrho_2)
  }.
\end{equation}
Therefore, \Eref{eq:product_bounds} and \Eref{eq:global_product_bounds} provide two equivalent ways of expressing the constraints derived from \Eref{eq:ineq:2} and \Eref{eq:ineq:lower}.

For $D_1,D_2\geq2$, the bounds in \Eref{eq:product_bounds} are tight. To exhibit saturating families, let $\varrho_\perp^{(D_j)}$ be a state satisfying
\begin{equation}
  \mel{f_1^{(D_j)}}{\varrho_\perp^{(D_j)}}{f_1^{(D_j)}}=0,
  \qquad j=1,2.
\end{equation}
For any $p\in[0,1]$, consider
\begin{equation}
  \varrho_{\mathrm{min}}(p)
  =
  p\,\dyad{f_1^{(D_1)}}\otimes\dyad{f_1^{(D_2)}}
  +(1-p)\,\varrho_\perp^{(D_1)}\otimes\varrho_\perp^{(D_2)}.
\end{equation}
We denote the corresponding reduced states by $\varrho_{\mathrm{min},1}(p)=\Tr_2[\varrho_{\mathrm{min}}(p)]$ and $\varrho_{\mathrm{min},2}(p)=\Tr_1[\varrho_{\mathrm{min}}(p)]$. For this family,
\begin{equation}
  \Sigma\qty(\varrho_{\mathrm{min}})=pD,
  \qquad
  \Sigma\qty(\varrho_{\mathrm{min},1})=pD_1,
  \qquad
  \Sigma\qty(\varrho_{\mathrm{min},2})=pD_2,
\end{equation}
and hence
\begin{equation}
  \Sigma\qty(\varrho_{\mathrm{min},1})
  \Sigma\qty(\varrho_{\mathrm{min},2})
  =
  \frac{\Sigma\qty(\varrho_{\mathrm{min}})^2}{D}.
\end{equation}
Therefore, this family saturates the lower product bound.

A family saturating the upper product bound is
\begin{equation}
\begin{split}
  \varrho_{\mathrm{max}}(p)
  ={}&
  p\,\dyad{f_1^{(D_1)}}\otimes\dyad{f_1^{(D_2)}}
  +\frac{1-p}{2}
  \qty(
    \dyad{f_1^{(D_1)}}\otimes\varrho_\perp^{(D_2)}
  )
  \\
  &+
  \frac{1-p}{2}
  \qty(
    \varrho_\perp^{(D_1)}\otimes\dyad{f_1^{(D_2)}}
  ).
\end{split}
\end{equation}
Similarly, we denote its reduced states by
$\varrho_{\mathrm{max},1}(p)=\Tr_2[\varrho_{\mathrm{max}}(p)]$ and
$\varrho_{\mathrm{max},2}(p)=\Tr_1[\varrho_{\mathrm{max}}(p)]$.
Its global grand sum is
\begin{equation}
  \Sigma\qty(\varrho_{\mathrm{max}})=pD,
\end{equation}
whereas the reduced grand sums are
\begin{equation}
  \Sigma\qty(\varrho_{\mathrm{max},1})=\frac{1+p}{2}D_1,
  \qquad
  \Sigma\qty(\varrho_{\mathrm{max},2})=\frac{1+p}{2}D_2.
\end{equation}
It follows that
\begin{equation}
  \Sigma\qty(\varrho_{\mathrm{max},1})
  \Sigma\qty(\varrho_{\mathrm{max},2})
  =
  \frac{D(1+p)^2}{4}
  =
  \frac{\qty[D+\Sigma\qty(\varrho_{\mathrm{max}})]^2}{4D},
\end{equation}
so the upper product bound is saturated for every $p\in[0,1]$. Together, these tight inequalities provide complementary descriptions of how the global grand sum constrains, and is constrained by, the grand sums of the two subsystems.

% *}}}
\subsection{Basis-independent bounds on the global distillable texture} % {{{*
\label{sec:bounds_distillable}
Consider the distillable texture $T(\varrho)=1-\langle f_1|\varrho|f_1\rangle$, defined in Sec.~\ref{sec:preliminaries}.
Recently, an elegant result on the maximal and minimal values assumed by $T(\varrho)$, across all possible orthogonal bases, has been derived in Ref.~\cite{aditi}, and reads (in a slightly distinct notation):
\begin{equation}
T_{\rm min}=1-\lambda^{\uparrow}_D, \;\;\; T_{\rm max}=1-\lambda^{\uparrow}_{1},
\end{equation}
where $\{\lambda^{\uparrow}_1, \lambda^{\uparrow}_2, \dots, \lambda^{\uparrow}_D\}$ stands for the eigenvalues of $\varrho$ in non-decreasing order. These basis-independent relations have enabled interesting connections between QST and other resources \cite{aditi}.
Here we note that, complementarily to those results,  there are general inequalities valid for the eigenvalues of $\varrho$ and the eigenvalues of the reduced density matrices \cite{eigenvalues}. In the bipartite case these relations are
\begin{equation}
\sum_{i=1}^{k}\lambda^{\uparrow}_{i}(\varrho_1)+\sum_{j=1}^{\ell}\lambda^{\uparrow}_{j}(\varrho_2)\ge \sum_{i=1}^{kD_2+\ell D_1-k\ell}\lambda^{\uparrow}_{i}(\varrho)+\sum_{j=1}^{k\ell}\lambda^{\uparrow}_{j}(\varrho),
\end{equation}
where $k=1, \dots, D_1-1$ and $\ell=1, \dots, D_2-1$. In particular, by setting $k=\ell=1$ one obtains a result that can be expressed in terms of extremal texture values as follows
\begin{equation}
T_{\rm max}(\varrho)\ge\frac{T_{\rm max}(\varrho_1)+T_{\rm max}(\varrho_2)}{2},
\label{ineqmax}
\end{equation}
which simply states that the maximum distillable texture of the global state cannot be smaller than the average of the maximum distillable textures of the reduced systems (note that these maximum values are generally attained for \textit{different} bases).
Analogously, for $k=D_1-1$ and $\ell=D_2-1$, we get
\begin{equation}
T_{\rm max}(\varrho)- T_{\rm min}(\varrho) \ge 1-T_{\rm min}(\varrho_1)-T_{\rm min}(\varrho_2).
\label{ineqmin}
\end{equation}
Considering that the difference 
\begin{equation}
\mathfrak{P}(\varrho)\equiv T_{\rm max}(\varrho)- T_{\rm min}(\varrho)
\end{equation} 
has been shown to be a purity monotone in Ref.~\cite{aditi}, we see that this purity is lower bounded by $1-T_{\rm min}(\varrho_1)-T_{\rm min}(\varrho_2)$. Therefore, one can write \Eref{ineqmin} as:
\begin{equation}
 \mathfrak{P}(\varrho)\ge\frac{\mathfrak{P}(\varrho_1)+\mathfrak{P}(\varrho_2)}{2}+ 1-\frac{ T_{\rm max}(\varrho_1)+T_{\rm max}(\varrho_2)+T_{\rm min}(\varrho_1)+T_{\rm min}(\varrho_2)}{2}.
 \end{equation}
Now, we note that $\lambda^{\uparrow}_D+\lambda^{\uparrow}_{1}\le 1$, since the sum of all eigenvalues must be equal to 1, and thus $T_{\rm max}(\varrho_j)+T_{\rm min}(\varrho_j)\ge 1$. So, a weaker, but perhaps more informative form of the previous inequality is simply
\begin{equation}
 \mathfrak{P}(\varrho)\ge\frac{\mathfrak{P}(\varrho_1)+\mathfrak{P}(\varrho_2)}{2}.
 \label{averageP}
\end{equation}
Interestingly, there is no equivalent relation for the standard purity $P(\varrho)={\rm Tr}(\varrho^2)$. Consider, for instance, $\varrho = \mathds{1}_4/4$ and $\varrho_1 =\varrho_2= \mathds{1}_2/2$ for which we have $P(\varrho) = 1/4$ and $P(\varrho_1)=P(\varrho_2)=1/2$. On the other hand $\mathfrak{P}(\mathds{1}_4/4)=\mathfrak{P}(\mathds{1}_2/2)=0$, satisfying (and saturating) the inequality in \Eref{averageP}.

%

% *}}}
% *}}}
\section{Set texture} % {{{* 7
\label{sec:set-texture}
% Intro {{{*

A common criticism to standard coherence theory is that the amount of coherence assigned to a state depends on the reference basis. This motivated several attempts to formulate basis-independent notions of coherence. A natural approach is to minimize the coherence over all bases, for any individual state, leading to ``total coherence'' or related basis-independent quantities \cite{Yu2016,Radhakrishnan2019,Jin2024}. These constructions are informative, but they effectively measure the departure from maximal mixedness rather than incompatibility with a common classical description.

The notion of set coherence approaches the problem from a different, but closely related perspective. It is a resource associated with \emph{families} of states. A set $\cS=\{\varrho_1,\dots,\varrho_N\}$ is set-incoherent if there exists a single orthonormal basis in which all members of the set are simultaneously diagonalizable~\cite{Designolle2021}. For density operators this condition is equivalent to pairwise commutativity
$[\varrho_a,\varrho_b]=0\quad \forall a,b$ \cite{Designolle2021}.
%Set coherence is then quantifies the deviation from this free structure by minimizing basis-dependent coherence quantities over one common basis \cite{Designolle2021}.

Inspired by the idea of set coherence, this section introduces the concept of set texture, an extension of quantum-state texture from a single density operator to a family of states. The motivation for this construction originates from the basis dependence of standard QST. Given any set of states, one can ask what is the  texture of the whole set in a fixed basis. For example, a simple way to define this overall texture is to sum up the rugosity (an additive measure) of each state in the set. The resulting number is still basis dependent, but one can ask, out of all possible bases, which one gives the smaller amount of QST. This number can be seen as the basis-independent essential QST attached to the set. 
%By requiring a single reference basis for the entire family, set texture preserves information about whether the states admit compatible texture-minimizing descriptions. 
This common-basis requirement is also natural whenever several possible preparations must be described or manipulated within the same reference frame.

The optimization problem is then to minimize (maximize) the overall texture (grand sum) contributions of the states over all bases. We remark that the resulting set texture depends on how one aggregates the states, e.g., via tensor product or direct sum. From these two perspectives, we define arithmetic and geometric set textures, as well as natural extensions stemming from these two perspectives.  In addition, we evaluate these quantifiers for sets of representative states. These examples illustrate the limiting cases of the proposed measures and the different information retained by the different aggregation rules.

% *}}}
%\subsection{Set coherence} % {{{*

% *}}}
\subsection{Setup and notation} % {{{*

Let $\cS=\{\varrho_1,\varrho_2,\dots,\varrho_N\}$ be a family of density operators on the same $D$-dimensional Hilbert space. For a unitary $U$, define
\begin{equation}
\Sigma_U(\varrho_a):= D \bra{f_1}U\varrho_a U^{\dagger}\ket{f_1},
\end{equation}
where $\Sigma_U(\varrho_a)$ denotes the grand sum of $\varrho_a$ in the basis selected by $U$. Since $U\varrho_a U^{\dagger}$ is again a density matrix, each $\Sigma_U(\varrho_a)$ is real and satisfies
\begin{equation}
0\le \Sigma_U(\varrho_a)\le D.
\end{equation}
The associated single-state rugosity in the basis selected by $U$ is
\begin{equation}
\mathfrak{R}_U(\varrho_a):=-\ln \frac{\Sigma_U(\varrho_a)}{D}.
\end{equation}

Here, the grand sum $\Sigma_U(\varrho_a)$ measures how well the state $\varrho_a$ can be aligned with the textureless direction under a common basis choice $U$. The remaining question is how one should aggregate the $N$ overlaps $\{\Sigma_U(\varrho_a)\}_{a=1}^N$ into a single family-level quantity. Different aggregation principles lead to different notions of set rugosity. Below we show that two particularly natural choices arise from composing the family members through a direct sum and via tensoring.  A third, more conservative framework, then emerges from the generalized-mean hierarchy as the worst-case limit.

% *}}}
\subsection{Tensor-product framework and geometric rugosity} % {{{*

To define a family-level texture quantifier in the spirit of set coherence, we begin by asking how a whole set of states may be represented relative to one common textureless direction. A natural first possibility is to treat the family as a system composed by the tensor product of the states.

Consider therefore $\varrho_\otimes:=\varrho_1\otimes\cdots\otimes\varrho_N$ on $\mathcal H_\otimes:=\mathcal H^{\otimes N}$. If the same basis change is applied to every factor, then
\begin{equation}
U_\otimes:=U^{\otimes N},
\qquad
\ket{F_\otimes}:=\ket{f_1}^{\otimes N}.
\end{equation}
The corresponding overlap is
\begin{equation}
\bra{F_\otimes}U_\otimes \varrho_\otimes U_\otimes^\dagger \ket{F_\otimes}
=
\prod_{a=1}^N \frac{\Sigma_U(\varrho_a)}{D}.
\end{equation}

Thus, within the tensor-product framework, the natural family overlap is multiplicative rather than additive. Taking the logarithmic version of this quantity leads to the \emph{geometric rugosity}
\begin{equation}
\mathfrak{R}_G(\cS)
:=
-\ln\!\left[
\max_U \left(\prod_{a=1}^N \frac{\Sigma_U(\varrho_a)}{D}\right)^{1/N}
\right].
\label{eq:RGdef}
\end{equation}
Equivalently,
\begin{equation}
\mathfrak{R}_G(\mathcal S)
=
\frac{1}{N}
\left(
-\ln\!\left[
\max_U
\bra{F_\otimes}U_\otimes \varrho_\otimes U_\otimes^\dagger \ket{F_\otimes}
\right]
\right).
\end{equation}
%So we see that the tensor-product framework is naturally adapted to the geometric rugosity: the family contributions accumulate multiplicatively, and $\mathfrak{R}_G$ is the corresponding intensive, per-copy rugosity.

Moreover, the geometric form is the closest analogue of the original single-state rugosity, since it keeps the logarithm and combines family contributions in a natural multiplicative way. 

\begin{proposition}
We know that the maximizing unitary in Eq.~\eqref{eq:RGdef} is such that $\Sigma_U(\varrho_a)\ge0$ for all $a$. Then
\begin{equation}
\mathfrak{R}_G(\cS)=\min_U \frac{1}{N}\sum_{a=1}^N \mathfrak{R}_U(\varrho_a),
\label{eq:RGcompact}
\end{equation}
where $\mathfrak{R}_U(\varrho_a)=-\ln \frac{\Sigma_U(\varrho_a)}{D}.$

\end{proposition}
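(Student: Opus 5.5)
The identity \eqref{eq:RGcompact} should follow from monotonicity of $-\ln$ together with the product-to-sum property of the logarithm. I would work directly from the definition \eqref{eq:RGdef}.

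\textbf{Existence of the maximum.} For each unitary $U$, define
\begin{equation}
g(U):=\prod_{a=1}^N \frac{\Sigma_U(\varrho_a)}{D}\in[0,1].
\end{equation}
Each factor is nonnegative because $U\varrho_a U^\dagger$ is a density matrix. The map $U\mapsto g(U)$ is continuous on the compact unitary group, so the maximum in \eqref{eq:RGdef} is attained. Moreover, $g(U)^{1/N}$ is a monotone increasing function of $g(U)$ on $[0,1]$, so $\max_U g(U)^{1/N}=(\max_U g(U))^{1/N}$.

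\textbf{Case of a positive maximum.} Suppose $\max_U g(U)>0$. Then the restriction to unitaries with $g(U)>0$ does not change the maximum. For these unitaries every $\Sigma_U(\varrho_a)$ is strictly positive, so each $\mathfrak{R}_U(\varrho_a)$ is finite. Since $x\mapsto -\ln x$ is strictly decreasing on $(0,1]$, it converts the maximum into a minimum:
\begin{equation}
-\ln\Big[\max_U g(U)^{1/N}\Big]=\min_U\Big(-\tfrac1N\ln g(U)\Big).
\end{equation}
Finally, expand $\ln g(U)=\sum_a \ln\frac{\Sigma_U(\varrho_a)}{D}$ to obtain $\min_U \frac1N\sum_a \mathfrak{R}_U(\varrho_a)$, which is \eqref{eq:RGcompact}.

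\textbf{Degenerate case (main obstacle).} The only delicate step is the possibility that the maximum is zero. This happens when, for every $U$, some $\Sigma_U(\varrho_a)$ vanishes; for instance, when $D=2$ and the set contains the states $\dyad{0}$ and $\dyad{1}$? In fact it does not happen there, because any $\ket{f_1}$ direction has positive overlap with both. In general I would handle this case by the convention $-\ln 0=+\infty$, allowing $\mathfrak{R}_U(\varrho_a)\in[0,+\infty]$. With this convention both sides of \eqref{eq:RGcompact} equal $+\infty$, and the equality still holds. Alternatively, one can show directly that the maximum is always positive. The set of unitaries with $\bra{f_1}U\varrho_aU^\dagger\ket{f_1}=0$ is a proper real-analytic subvariety of the unitary group for each $a$, since it is nonempty but not everything, because $\varrho_a\neq0$. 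A finite union of such subvarieties has empty interior, so some $U$ makes all $N$ overlaps positive. I would include this genericity argument to justify the hypothesis that the maximizer has all $\Sigma_U(\varrho_a)>0$, so that every term in the sum is finite.
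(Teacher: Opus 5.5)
Your proof is correct and follows essentially the paper's route: the strict monotonicity of $-\ln$ turns the maximum of $\bigl(\prod_a \Sigma_U(\varrho_a)/D\bigr)^{1/N}$ into a minimum, and the logarithm of the product splits into the average of the single-state rugosities $\mathfrak{R}_U(\varrho_a)$. Your extra steps (compactness for attainment, and the analytic-genericity argument that $\max_U \prod_a \Sigma_U(\varrho_a)>0$, which the paper tacitly assumes) are sound; two small slips are harmless: the zero set of each overlap need not be nonempty (it is empty for full-rank $\varrho_a$), and for $\{\dyad{0},\dyad{1}\}$ only \emph{some} direction, e.g.\ $\ket{+}$, has positive overlap with both, not \emph{every} direction.
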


\begin{proof}
Define
\begin{equation}
F(U):=\left(\prod_{a=1}^N \frac{\Sigma_U(\varrho_a)}{D}\right)^{1/N}.
\end{equation}
Because each $\Sigma_U(\varrho_a)$ is real and nonnegative, $F(U)$ is real and nonnegative as well. On the region where $F(U)>0$, the function $-\ln x$ is strictly decreasing on $(0,\infty)$; therefore, maximizing $F(U)$ is equivalent to minimizing $-\ln F(U)$:
\begin{equation}
-\ln\!\left[\max_U F(U)\right]=\min_U\bigl[-\ln F(U)\bigr].
\end{equation}
Now,
\begin{align}
-\ln F(U)
=-\ln\!\left(\prod_{a=1}^N \frac{\Sigma_U(\varrho_a)}{D}\right)^{1/N}=-\frac{1}{N}\sum_{a=1}^N \ln \frac{\Sigma_U(\varrho_a)}{D} =\frac{1}{N}\sum_{a=1}^N \bigl[-\ln \frac{\Sigma_U(\varrho_a)}{D}\bigr] =\frac{1}{N}\sum_{a=1}^N \mathfrak{R}_U(\varrho_a).
\end{align}
Substituting this identity into the previous equation yields Eq.~\eqref{eq:RGcompact}.
\end{proof}

Equation~\eqref{eq:RGcompact} is the compact form of the geometric rugosity and explains why it is a particularly appealing candidate: the set quantity can be interpreted directly as the minimum \emph{average single-state rugosity} under one shared basis. However, the optimization must be handled for each specific set, which makes the tensor product framework more costly compared to a direct-sum framework.

% *}}}
\subsection{Direct-sum framework and arithmetic rugosity} % {{{*

Another natural possibility is to treat the family as a block-structured object, in which each state occupies its own sector and the common basis change acts identically on every sector. This leads to a direct-sum encoding of the states.

Concretely, consider $\varrho_\oplus:=\varrho_1\oplus\cdots\oplus\varrho_N$ acting on $\mathcal H_\oplus:=\mathcal H\oplus\cdots\oplus\mathcal H$. If the same basis change is applied to every block, the corresponding unitary is
\begin{equation}
U_\oplus:=U\oplus\cdots\oplus U,
\end{equation}
and the natural reference vector built from the single-state textureless direction is
\begin{equation}
\ket{F_\oplus}:=\frac{1}{\sqrt N}\bigl(\ket{f_1}\oplus\cdots\oplus\ket{f_1}\bigr).
\end{equation}
A direct computation then yields
\begin{equation}
\bra{F_\oplus}U_\oplus \varrho_\oplus U_\oplus^\dagger \ket{F_\oplus}
=
\frac{1}{N}\sum_{a=1}^N \frac{\Sigma_U(\varrho_a)}{D}.
\end{equation}

Thus, within the direct-sum framework, the natural family overlap is the arithmetic average of the individual overlaps. This motivates the arithmetic rugosity
\begin{equation}
\mathfrak{R}_A(\cS)
:=
-\ln\!\left[\max_U \frac{1}{N}\sum_{a=1}^N \frac{\Sigma_U(\varrho_a)}{D}\right].
\label{eq:RA}
\end{equation}
This quantity is the logarithmic family analogue naturally associated with the direct-sum encoding.

The same arithmetic structure may be expressed without passing to the direct-sum space. Defining the mean state
\begin{equation}
\bar\varrho:=\frac{1}{N}\sum_{a=1}^N \varrho_a,
\end{equation}
one finds
\begin{equation}
\frac{1}{N}\sum_{a=1}^N \frac{\Sigma_U(\varrho_a)}{D}=\bra{f_1}U\bar\varrho U^\dagger\ket{f_1},
\end{equation}
so that
\begin{equation}\label{RAlambda}
\mathfrak{R}_A(\cS)=-\ln\lambda^{\uparrow}(\bar\varrho).
\end{equation}
In this sense, the direct-sum and mean-state pictures are two equivalent arithmetic realizations of the same framework. Moreover, we can see from Eq.~\eqref{RAlambda} that the optimization can be done analytically, giving us an equation easier to handle than its geometrical counterpart.

% If one nevertheless wishes to keep a nonlogarithmic companion in closer analogy with Patra's single-state construction, one may define
% \begin{equation}
% T_{\mathrm{set}}(\cS)
% :=
% 1-\max_U \frac{1}{N}\sum_{a=1}^N \frac{\Sigma_U(\varrho_a)}{D}
% =
% 1-\lambda^{\uparrow}(\bar\varrho).
% \label{eq:Tset}
% \end{equation}
% We will use this quantity occasionally as the arithmetic texture counterpart of $\mathfrak{R}_A$, although the logarithmic quantity $\mathfrak{R}_A$ will be the primary family rugosity in what follows. Equivalently,
% \begin{equation}
% \mathfrak{R}_A(\cS)=-\ln\!\bigl(1-T_{\mathrm{set}}(\cS)\bigr).
% \end{equation}

% *}}}
\subsection{Generalized-mean hierarchy and minimum rugosity} % {{{*

The tensor-product and direct-sum formalisms single out the geometric and arithmetic constructions. A natural next question is whether these two quantities belong to a broader family of possible set rugosities, obtained by changing the way the overlaps $\{\Sigma_U(\varrho_a)\}_{a=1}^N$ are aggregated.

This is naturally answered by the generalized means. For fixed $U$, define
\begin{equation}
M_p(U):=
\left(
\frac{1}{N}\sum_{a=1}^N \frac{\Sigma_U(\varrho_a)}{D}^p
\right)^{1/p},
\qquad p\neq 0,
\end{equation}
and let
\begin{equation}
M_0(U):=\lim_{p\to 0}M_p(U)
=
\left(\prod_{a=1}^N \frac{\Sigma_U(\varrho_a)}{D}\right)^{1/N}.
\end{equation}
The associated family rugosity is
\begin{equation}
\mathfrak{R}_p(\mathcal S):=
-\ln\!\left[\max_U M_p(U)\right].
\end{equation}

In this language, the previously introduced quantities appear as distinguished members of the same hierarchy:
\begin{equation}
\mathfrak{R}_A(\mathcal S)=\mathfrak{R}_1(\mathcal S),
\qquad
\mathfrak{R}_G(\mathcal S)=\mathfrak{R}_0(\mathcal S).
\end{equation}
The minimum rugosity then emerges naturally as the extremal negative-$p$ limit,
\begin{equation}
\mathfrak{R}_{\min}(\mathcal S)=\lim_{p\to -\infty}\mathfrak{R}_p(\mathcal S)
=
-\ln\!\left[\max_U \min_{1\le a\le N} \frac{\Sigma_U(\varrho_a)}{D}\right].
\label{eq:Rmin}
\end{equation}
Unlike the previous two quantities, the minimum rugosity does not appear to be naturally singled out by a standard linear or multiplicative state-composition rule. Rather, it is governed by a different principle: it asks for the basis that optimizes the \emph{worst represented} member of the family. For this reason, $\mathfrak{R}_{\min}$ is best interpreted as the robust or adversarial framework of the theory.

Having identified the geometric and arithmetic constructions as natural consequences of two distinct composite encodings, it is natural to ask whether other aggregation principles may lead to additional family rugosities. The opposite endpoint is governed by the largest overlap in the family:
\begin{equation}
\mathfrak{R}_{\max}(\mathcal S):=
\lim_{p\to+\infty}\mathfrak{R}_p(\mathcal S) = -\ln\!\left[
\max_U \max_a \frac{\Sigma_U(\varrho_a)}{D}
\right].
\end{equation}

Since
\begin{equation}
\max_U \frac{\Sigma_U(\varrho_a)}{D}=\lambda^{\uparrow}(\varrho_a),
\end{equation}
it follows that
\begin{equation}
\mathfrak{R}_{\max}(\mathcal S)
=
-\ln\!\left[\max_a \lambda^{\uparrow}(\varrho_a)\right].
\end{equation}

{\it Mixed composite encodings.}
The tensor-product and direct-sum frameworks motivate the geometric and arithmetic constructions in a particularly clean way. One may then ask whether more complicated composite encodings, built from both tensor products and direct sums, lead to further members of the generalized-mean hierarchy. In general, mixed encodings typically produce \emph{nested} aggregations rather than a single generalized mean. For instance, tensor products of direct-sum blocks lead naturally to geometric combinations of arithmetic averages, whereas direct sums of tensor-product blocks lead to arithmetic combinations of multiplicative overlaps. These two constructions are generally inequivalent.

A particularly simple symmetric case is obtained by taking direct sums of tensor powers. If one considers blocks of the form \(\varrho_j^{\otimes \ell}\) and their direct sum, the induced overlap is proportional to
\begin{equation}
\frac{1}{k}\sum_{j=1}^k \left(\frac{\Sigma_U(\varrho_{j})}{D}\right)^\ell,
\end{equation}
so that the associated intensive logarithmic quantity is exactly the generalized-mean rugosity \(\mathfrak{R}_p\) with \(p=\ell\). In this sense, positive integer values of \(p\) admit a natural realization as direct sums of tensor powers. The general mixed-composition formulas are expanded in Appendix~\ref{app:mixed}.

The generalized-mean hierarchy helps organize the conceptual role of the different candidates. The negative-$p$ framework places increasing weight on the smallest overlaps. In the limit $p\to-\infty$, the minimum rugosity $\mathfrak{R}_{\min}$ becomes a fully worst-case quantity, since it asks how well the highest-texture state in the family can be aligned with the common textureless direction. At $p=0$, the geometric rugosity $\mathfrak{R}_G$ still penalizes small overlaps strongly, but does so in a smoother multiplicative way. At $p=1$, the arithmetic rugosity $\mathfrak{R}_A$ becomes softer, since it depends only on the average overlap. By contrast, for positive $p>1$, the hierarchy increasingly rewards large overlaps, and in the limit $p\to+\infty$ the quantity $\mathfrak{R}_{\max}$ depends only on the lowest-texture state in the family.

From this perspective, $\mathfrak{R}_G$, $\mathfrak{R}_A$, and $\mathfrak{R}_{\min}$ are not isolated constructions, but distinguished members of a broader generalized-mean hierarchy. The arithmetic rugosity is the soft framework, the geometric rugosity is an intermediate multiplicative framework, and the minimum rugosity is the most conservative one. The quantity $\mathfrak{R}_{\max}$, while mathematically natural as the optimistic endpoint of the same hierarchy, is typically too weak to serve as a sensitive set quantifier, since it is governed only by the single most favorable state in the family. In particular, if the family contains a pure state, then $\mathfrak{R}_{\max}(\mathcal S)=0$.

The tensor-product and direct-sum frameworks should therefore be understood as natural \emph{realizations} of particular constructions of this family-level hierarchy, not as the origin of the definitions themselves. The arithmetic rugosity is naturally associated with a direct-sum or mean-state encoding of the family, whereas the geometric rugosity is naturally associated with a tensor-product encoding. The minimum rugosity $\mathfrak{R}_{\min}$ does not appear to be naturally singled out by a standard linear or multiplicative state-composition rule; rather, it is best interpreted as the robust extremal framework of the hierarchy. Likewise, $\mathfrak{R}_{\max}$ is best interpreted as its optimistic extremal framework.

This viewpoint suggests that the choice of $p$ is not merely technical but reflects the notion of family compatibility that one wishes to emphasize. Negative values of $p$ privilege uniform good performance across all states in the family, $p=0$ balances the contributions multiplicatively, $p=1$ privileges the average performance, and large positive values of $p$ emphasize the best represented members.

% *}}}
\subsection{Applications and examples} % {{{*

\subsubsection{$N-1$ textureless states and one arbitrary outlier}

Consider the family $\cS_{N,\sigma}:=\{f_1,f_1,\dots,f_1,\sigma\}$, with $N-1$ copies of the textureless state and one arbitrary density operator $\sigma$. Then, the mean state reads
\begin{equation}
\bar\varrho=\frac{(N-1)f_1+\sigma}{N}.
\end{equation}
% and therefore
% \begin{equation}
% T_{\mathrm{set}}(\cS_{N,\sigma})
% =
% 1-\lambda^{\uparrow}\!\left(\frac{(N-1)f_1+\sigma}{N}\right).
% \label{eq:Tsetexample}
% \end{equation}
% If $\ket{f_1}$ is an eigenvector of $\sigma$ associated with $\lambda^{\uparrow}(\sigma)$, then the expression simplifies to
% \begin{equation}
% T_{\mathrm{set}}(\cS_{N,\sigma})
% =
% 1-\frac{N-1+\lambda^{\uparrow}(\sigma)}{N}
% =
% \frac{1-\lambda^{\uparrow}(\sigma)}{N}.
% \end{equation}
% In this favorable case the outlier is diluted by a factor $1/N$.

And for the arithmetic rugosity, one has
\begin{equation}
\mathfrak{R}_A(\cS_{N,\sigma})
=
-\ln\lambda^{\uparrow}\!\left(\frac{(N-1)f_1+\sigma}{N}\right).
\label{eq:RAexample}
\end{equation}
If $\ket{f_1}$ is a maximal-eigenvalue eigenvector of $\sigma$, then
\begin{equation}
\mathfrak{R}_A(\cS_{N,\sigma})
=
-\ln\!\left(\frac{N-1+\lambda^{\uparrow}(\sigma)}{N}\right).
\end{equation}
Hence $\mathfrak{R}_A$ is soft with respect to outliers, because the overlap is averaged \emph{before} the logarithm, a single non-zero texture state contributes in a diluted way.

To analyze the remaining two rugosities, we can find their upper and lower bounds and check when they coincide. Writing $|\psi\rangle=U^{\dagger}|f_1\rangle$, we have
\begin{equation}
\bra{f_1}Uf_1U^{\dagger}\ket{f_1}=|\braket{f_1}{\psi}|^2,
\qquad
\bra{f_1}U\sigma U^{\dagger}\ket{f_1}=\bra{\psi}\sigma\ket{\psi}.
\end{equation}
Therefore
\begin{equation}
\mathfrak{R}_{min}(\cS_{N,\sigma})
=
-\ln\!\left[
\max_{|\psi\rangle}
\min\!\left\{ |\braket{f_1}{\psi}|^2,\,\bra{\psi}\sigma\ket{\psi}\right\}
\right].
\label{eq:Rinfexample}
\end{equation}

Choosing $|\psi\rangle=|f_1\rangle$ gives the upper bound
\begin{equation}
\mathfrak{R}_{min}(\cS_{N,\sigma})\le -\ln \bra{f_1}\sigma\ket{f_1}.
\end{equation}
On the other hand, since
\begin{equation}
\min\!\left\{ |\braket{f_1}{\psi}|^2,\,\bra{\psi}\sigma\ket{\psi}\right\}
\le \bra{\psi}\sigma\ket{\psi}
\le \lambda^{\uparrow}(\sigma),
\end{equation}
we obtain the lower bound
\begin{equation}
\mathfrak{R}_{min}(\cS_{N,\sigma})\ge -\ln \lambda^{\uparrow}(\sigma).
\end{equation}
If $|f_1\rangle$ is a maximal-eigenvalue eigenvector of $\sigma$, both bounds coincide and
\begin{equation}
\mathfrak{R}_{min}(\cS_{N,\sigma})=-\ln \lambda^{\uparrow}(\sigma).
\end{equation}
Unlike $\mathfrak{R}_A$, the minimum rugosity does \emph{not} dilute the outlier by $1/N$, which makes this rugosity quantifier the most ``sensitive'' to texture in the set of states.

Using the same parametrization, the geometric rugosity is
\begin{equation}
\mathfrak{R}_G(\cS_{N,\sigma})
=
-\ln\!\left[
\max_{|\psi\rangle}
\left(
|\braket{f_1}{\psi}|^{2(N-1)}\,\bra{\psi}\sigma\ket{\psi}
\right)^{1/N}
\right].
\label{eq:RGexample1}
\end{equation}
Equivalently,
\begin{equation}
\mathfrak{R}_G(\cS_{N,\sigma})
=
\min_{|\psi\rangle}
\frac{1}{N}\left[
-(N-1)\ln |\braket{f_1}{\psi}|^2
-\ln\bra{\psi}\sigma\ket{\psi}
\right].
\label{eq:RGexample2}
\end{equation}

The same benchmark yields simple bounds. Since $|\braket{f_1}{\psi}|^2\le 1$ and $\bra{\psi}\sigma\ket{\psi}\le \lambda^{\uparrow}(\sigma)$,
\begin{equation}
\left(
|\braket{f_1}{\psi}|^{2(N-1)}\,\bra{\psi}\sigma\ket{\psi}
\right)^{1/N}
\le \lambda^{\uparrow}(\sigma)^{1/N},
\end{equation}
which implies
\begin{equation}
\mathfrak{R}_G(\cS_{N,\sigma})\ge -\frac{1}{N}\ln \lambda^{\uparrow}(\sigma).
\end{equation}
Choosing $|\psi\rangle=|f_1\rangle$ gives
\begin{equation}
\mathfrak{R}_G(\cS_{N,\sigma})\le -\frac{1}{N}\ln \bra{f_1}\sigma\ket{f_1}.
\end{equation}
If $|f_1\rangle$ is a maximal-eigenvalue eigenvector of $\sigma$, then the bounds coincide and
\begin{equation}
\mathfrak{R}_G(\cS_{N,\sigma})=-\frac{1}{N}\ln \lambda^{\uparrow}(\sigma).
\end{equation}
Hence, in this favorable case, the geometric rugosity dilutes the outlier by a factor $1/N$, but still retains a genuinely logarithmic dependence on the outlier's best overlap.

Finally, the weighted arithmetic--geometric mean inequality gives the ordering
\begin{equation}
\mathfrak{R}_{min}(\cS_{N,\sigma})\ge \mathfrak{R}_G(\cS_{N,\sigma})\ge \mathfrak{R}_A(\cS_{N,\sigma}),
\end{equation}
and therefore also for the benchmark family $\cS_{N,\sigma}$. Thus $\mathfrak{R}_A$ is the softest family rugosity, $\mathfrak{R}_{min}$ is the most conservative one, and $\mathfrak{R}_G$ is in between.

With this example, one can see how each rugosity quantifier reacts when there is only one outlier, but for the sake of completeness, one may also consider the case when all states in the set are the same, that is, $\cS_{N,\psi}:=\{\psi,\dots,\psi\}$. By choosing a common unitary $U$ such that $U \psi U^\dagger=f_1$, all overlaps are equal to $1$, and therefore $\mathfrak{R}_A(\cS_{N,\psi})=\mathfrak{R}_{min}(\cS_{N,\psi})=\mathfrak{R}_G(\cS_{N,\psi})=0$.

\subsubsection{Orthogonal basis family}

Consider the family $S_{\mathrm{orth}}^{(N)}:=\{e_1,e_2,\ldots,e_N\}$, with $e_a:=|e_a\rangle\langle e_a|$, where $\{|e_a\rangle\}_{a=1}^N$ is an orthonormal basis of an $N$-dimensional subspace. If the basis is complete in the whole Hilbert space, then $N=D$.

Writing again $|\psi\rangle=U^\dagger|f_1\rangle$, one has

\begin{equation}
\frac{\Sigma_U(\varrho_a)}{D}=\langle \psi|e_a|\psi\rangle=|\langle e_a|\psi\rangle|^2=:p_a.
\end{equation}

Since the vectors $|e_a\rangle$ are orthonormal, the numbers $p_a$ satisfy $p_a\ge 0$ and $\sum_{a=1}^N p_a\le 1$, with equality whenever $|\psi\rangle$ lies in the span of the basis family.

For this family, the set texture and all three rugosities admit exact evaluations. Since
\begin{equation}
\bar\varrho=\frac{1}{N}\sum_{a=1}^N e_a=\frac{P}{N},
\end{equation}
where $P$ is the projector onto the span of $\{|e_a\rangle\}_{a=1}^N$, the nonzero eigenvalues of $\bar\varrho$ are all equal to $1/N$. Therefore
\begin{equation}
%T_{\mathrm{set}}(S_{\mathrm{orth}}^{(N)})=1-\frac{1}{N},
%\qquad
\mathfrak{R}_A(S_{\mathrm{orth}}^{(N)})=\ln N.
\end{equation}
If the family is a complete orthonormal basis of the whole Hilbert space, then $P=\mathbb I_D$, so that $\bar\varrho=\mathbb I_D/N$.

For the minimum rugosity, one obtains
\begin{equation}
\mathfrak{R}_{min}(S_{\mathrm{orth}}^{(N)})
=
-\ln\!\left[
\max_{|\psi\rangle}\min_a p_a
\right].
\end{equation}
Because $\sum_{a=1}^N p_a\le 1$, one necessarily has $\min_a p_a\le \frac{1}{N}$. This bound is saturated by the equal-superposition state 

\begin{equation}
|\psi_{\mathrm{opt}}\rangle=\frac{1}{\sqrt N}\sum_{a=1}^N e^{i\theta_a}|e_a\rangle,
\end{equation}

for which $p_a=1/N$ for all $a$. Hence
\begin{equation}
\mathfrak{R}_{min}(S_{\mathrm{orth}}^{(N)})=\ln N.
\end{equation}

For the geometric rugosity,
\begin{equation}
\mathfrak{R}_G(S_{\mathrm{orth}}^{(N)})
=
-\ln\!\left[
\max_{|\psi\rangle}
\left(\prod_{a=1}^N p_a\right)^{1/N}
\right].
\end{equation}
Now the arithmetic--geometric mean inequality gives
\begin{equation}
\left(\prod_{a=1}^N p_a\right)^{1/N}
\le
\frac{1}{N}\sum_{a=1}^N p_a
\le
\frac{1}{N}.
\end{equation}
Equality is attained for the same equal-superposition state, so
$\mathfrak{R}_G(S_{\mathrm{orth}}^{(N)})=\ln N.$
Thus the orthogonal-basis family provides a clean example in which all three rugosity candidates agree. In the simplest case,
$S=\{\ketbra{0}{0},\ketbra{1}{1}\}$, one finds
%T_{\mathrm{set}}(S)=\frac{1}{2},
%\qquad
$\mathfrak{R}_A(S)=\mathfrak{R}_{min}(S)=\mathfrak{R}_G(S)=\ln 2$.

% *}}}
% *}}}
\section{Closing Remarks} % {{{* 8
In this work, we have presented advances in consolidating and expanding the resource theory of quantum-state texture (QST). We used the grand sum as the central quantity underlying the texture measures considered here. The grand sum is simple to evaluate both theoretically and experimentally \cite{QSTexperiment}, and depends on the coherences in the chosen basis through the sum of their real parts.

After a preliminary contextualization, we addressed the simplest scenario of a single qubit, providing a full characterization of free operations and a geometric description of their action on the Bloch ball. We presented alternative proofs that the grand sum completely determines free convertibility between pure qubit states but provides only a necessary condition for general states, consistent with the complete qubit conversion criterion established in Ref.~\cite{lin_2026_quantumstate}. An explicit pair of incomparable states illustrates the partial ordering under free convertibility and the limitations of grand-sum-based quantification. Extending our characterization of free operations to qudits and obtaining complete conversion criteria for higher-dimensional mixed states remain important directions for the resource theory of QST.

In addition, we examined explicit families of free operations in arbitrary dimensions, distinguishing texture-preserving channels, such as specific unitary operations and certain Weyl and flip channels, from texture-decreasing amplitude damping channels. By choosing the textureless state as the target of relaxation, the latter provides concrete examples of texture depletion, including complete erasure in the full-damping limit. 

By focusing on the bipartite case, we established quantitative relations between the texture of a composite system and that of its subsystems. We determined rigorous bounds relating the global and reduced grand sums, including tight bounds for the product of the reduced grand sums at fixed global grand sum, and exhibited physical states saturating both limits. These results constrain how much texture remains accessible after discarding a subsystem. We also explored basis-independent bounds on distillable texture and their connection to a related purity measure.

Finally, we developed the concept of set texture, introducing a basis-independent formulation for families of states through optimization over a shared reference basis. The geometric and arithmetic rugosities arise naturally from tensor-product and direct-sum descriptions, respectively, and belong to a broader generalized-mean hierarchy that also includes a worst-case criterion. The examples considered illustrate how these measures retain different information about a family, particularly in their sensitivity to outliers. A natural extension is to consider direct sums of Hilbert spaces with different dimensions, as occur in symmetry-resolved decompositions. This would require specifying suitable reference states, sector weights, and allowed basis changes, opening a route to understanding how texture can be quantified within individual symmetry sectors and across the full system.

The results obtained strengthen the mathematical foundations of QST and broaden its scope from individual states to composite systems and families of preparations. They provide a basis for further investigations of texture as a diagnostic tool in quantum circuits, dynamical criticality, and quantum phase transitions.
%
% *}}}
% acknowledgments and funding %{{{*
\ack{J.A.d.L., A.F., and F.P. thank the Paraty Quantum Information School and Workshop (2025) for providing the starting point for this collaboration.}

\funding{J.A.d.L. and M.G. acknowledge support from SECIHTI (graduate scholarship and project CBF-2025-I-1548) and from UNAM-PAPIIT (project IG101324). F. P. acknowledges financial support from the Brazilian agencies Coordena\c{c}\~ao de Aperfei\c{c}oamento de Pessoal de N\'{\i}vel Superior (CAPES), Conselho Nacional de Desenvolvimento Cient\'{\i}fico e Tecnol\'ogico (CNPq), Funda\c{c}\~ao de Amparo \`a Pesquisa do Estado de S\~ao Paulo (FAPESP - Grant 2021/06535-0), and Funda\c{c}\~ao de Amparo \`a Ci\^encia e Tecnologia do Estado de Pernambuco (FACEPE - Grant BPP-0037-1.05/24). P.C.A. acknowledges Conselho Nacional de Desenvolvimento Cient\'{\i}fico e Tecnol\'ogico (CNPq - 350189/2026-9) and Projeto QUANTA UFPE, Convênio FINEP nº 01.24.0504.00.}

\bibliographystyle{unsrturl}
\bibliography{refs1.bib}

\begin{thebibliography}{10}

\bibitem{gour}
Eric Chitambar and Gilad Gour.
\newblock Quantum {{Resource Theories}}.
\newblock {\em Rev. Mod. Phys.}, 91(2):025001, April 2019.
\newblock \href {https://arxiv.org/abs/1806.06107} {\path{arXiv:1806.06107}},
  \href {https://doi.org/10.1103/RevModPhys.91.025001}
  {\path{doi:10.1103/RevModPhys.91.025001}}.

\bibitem{l1}
T.~Baumgratz, M.~Cramer, and M.~B. Plenio.
\newblock Quantifying {{Coherence}}.
\newblock {\em Phys. Rev. Lett.}, 113(14):140401, September 2014.
\newblock \href {https://doi.org/10.1103/PhysRevLett.113.140401}
  {\path{doi:10.1103/PhysRevLett.113.140401}}.

\bibitem{imag}
Alexander Hickey and Gilad Gour.
\newblock Quantifying the imaginarity of quantum mechanics.
\newblock {\em J. Phys. A: Math. Theor.}, 51(41):414009, September 2018.
\newblock \href {https://doi.org/10.1088/1751-8121/aabe9c}
  {\path{doi:10.1088/1751-8121/aabe9c}}.

\bibitem{ringbauer}
Martin Ringbauer, Thomas~R. Bromley, Marco Cianciaruso, Ludovico Lami,
  W.~Y.~Sarah Lau, Gerardo Adesso, Andrew~G. White, Alessandro Fedrizzi, and
  Marco Piani.
\newblock Certification and {{Quantification}} of {{Multilevel Quantum
  Coherence}}.
\newblock {\em Phys. Rev. X}, 8(4):041007, October 2018.
\newblock \href {https://doi.org/10.1103/PhysRevX.8.041007}
  {\path{doi:10.1103/PhysRevX.8.041007}}.

\bibitem{qst}
Fernando Parisio.
\newblock Quantum-{{State Texture}} and {{Gate Identification}}.
\newblock {\em Phys. Rev. Lett.}, 133(26):260801, December 2024.
\newblock \href {https://doi.org/10.1103/PhysRevLett.133.260801}
  {\path{doi:10.1103/PhysRevLett.133.260801}}.

\bibitem{QSTexperiment}
Carlos H.~S. Vieira, Xinfang Nie, Dawei Lu, and Fernando Parisio.
\newblock Quantum-state texture dynamics: Theory and experiment, 2026.
\newblock URL: \url{https://arxiv.org/abs/2609.05248}, \href
  {https://arxiv.org/abs/2609.05248} {\path{arXiv:2609.05248}}.

\bibitem{salazar}
Roberto Salazar, Jakub Czartowski, Ricard~Ravell Rodr{\'i}guez, Grzegorz
  {Rajchel-Mieldzio{\'c}}, Pawe{\l} Horodecki, and Karol {\.Z}yczkowski.
\newblock Quantum {{Resource Theories}} beyond {{Convexity}}.
\newblock {\em Quantum}, 10:2104, May 2026.
\newblock \href {https://doi.org/10.22331/q-2026-05-13-2104}
  {\path{doi:10.22331/q-2026-05-13-2104}}.

\bibitem{generalization}
Alexander C.~B. Greenwood, Joseph~M. Lukens, Li~Qian, and Brian~T. Kirby.
\newblock Re-examining the role of state texture in gate identification and
  fixed-point resource theories.
\newblock {\em Phys. Rev. A}, July 2026.
\newblock \href {https://doi.org/10.1103/83pf-837s}
  {\path{doi:10.1103/83pf-837s}}.

\bibitem{tinggui}
Yiding Wang, Hui Liu, and Tinggui Zhang.
\newblock Quantifying quantum-state texture.
\newblock {\em Phys. Rev. A}, 111(4):042427, April 2025.
\newblock \href {https://doi.org/10.1103/PhysRevA.111.042427}
  {\path{doi:10.1103/PhysRevA.111.042427}}.

\bibitem{cao}
Chengyang Zhang, Zhihua Guo, Bingke Zheng, and Huaixin Cao.
\newblock Quantum-state texture measures via weight and {{Tsallis}} relative
  entropy.
\newblock {\em Physics Letters A}, 563:131056, December 2025.
\newblock \href {https://doi.org/10.1016/j.physleta.2025.131056}
  {\path{doi:10.1016/j.physleta.2025.131056}}.

\bibitem{kim}
Yuntao Cui, Zhaobing Fan, and Sunho Kim.
\newblock Quanutm-{{State Texture}} as a {{Resource}}: {{Measures}} and
  {{Nonclassical Interdependencies}}, October 2025.
\newblock \href {https://arxiv.org/abs/2508.07481} {\path{arXiv:2508.07481}},
  \href {https://doi.org/10.48550/arXiv.2508.07481}
  {\path{doi:10.48550/arXiv.2508.07481}}.

\bibitem{Muthu}
R.~Muthuganesan.
\newblock Quantum state texture: {{Geometric}} and theoretic information
  perspective.
\newblock {\em Physics Letters A}, 570:131263, February 2026.
\newblock \href {https://doi.org/10.1016/j.physleta.2025.131263}
  {\path{doi:10.1016/j.physleta.2025.131263}}.

\bibitem{yu}
Wei-Ming Cao, Zhi-Xiang Jin, Wei Chen, Yan-Ling Wang, and Bing Yu.
\newblock Generalized measures of quantum-state texture via relative entropies
  and their applications.
\newblock {\em J. Phys. A: Math. Theor.}, 59(4):045303, January 2026.
\newblock \href {https://doi.org/10.1088/1751-8121/ae3aea}
  {\path{doi:10.1088/1751-8121/ae3aea}}.

\bibitem{lei}
Xiangyu Chen and Qiang Lei.
\newblock Quantifying and detecting quantum-state texture.
\newblock {\em J. Phys. A: Math. Theor.}, 59(27):275301, July 2026.
\newblock \href {https://doi.org/10.1088/1751-8121/ae7b8d}
  {\path{doi:10.1088/1751-8121/ae7b8d}}.

\bibitem{mondal}
Shampa Mondal, Soumajit Das, Preeti Parashar, and Tamal Guha.
\newblock Free encoding capacity: {{A}} universal unit for quantum resources,
  January 2026.
\newblock \href {https://arxiv.org/abs/2601.23116} {\path{arXiv:2601.23116}},
  \href {https://doi.org/10.48550/arXiv.2601.23116}
  {\path{doi:10.48550/arXiv.2601.23116}}.

\bibitem{aditi}
Ayan Patra, Tanoy~Kanti Konar, Pritam Halder, and Aditi Sen(De).
\newblock Role of quantum state texture in probing resource theories and
  quantum phase transitions.
\newblock {\em Phys. Rev. A}, 113(2):022411, February 2026.
\newblock \href {https://doi.org/10.1103/y2wz-ypln}
  {\path{doi:10.1103/y2wz-ypln}}.

\bibitem{lucas}
Lucas~C. C{\'e}leri, Krissia Zawadzki, Ivan Medina, and Diogo~O.
  {Soares-Pinto}.
\newblock Quantum state texture of dynamical criticality, May 2026.
\newblock \href {https://arxiv.org/abs/2605.04161} {\path{arXiv:2605.04161}},
  \href {https://doi.org/10.48550/arXiv.2605.04161}
  {\path{doi:10.48550/arXiv.2605.04161}}.

\bibitem{bateries}
Yiding Wang, Xiaofen Huang, and Tinggui Zhang.
\newblock Correlations {{Between Quantum Battery Capacity}} and {{Quantum
  Resources}} for {{Two-qubit System}}.
\newblock {\em Advanced Quantum Technologies}, 9(4):e70289, 2026.
\newblock \href {https://doi.org/10.1002/qute.70289}
  {\path{doi:10.1002/qute.70289}}.

\bibitem{relativity}
Zhiming Huang, Lianghui Zhao, Yiyong Ye, Jinyi Wang, Zhenbang Rong, and Xiaokui
  Sheng.
\newblock Quantum-state texture for accelerated atoms interacting with a
  massive scalar field.
\newblock {\em Quantum Inf Process}, 24(12):388, December 2025.
\newblock \href {https://doi.org/10.1007/s11128-025-05010-2}
  {\path{doi:10.1007/s11128-025-05010-2}}.

\bibitem{lorentz}
Zhiming Huang.
\newblock Evolution of quantum state texture under lorentz transformations:
  Invariance, competition, and sensitivity in single-particle and bipartite
  entangled systems.
\newblock {\em Phys. Rev. A}, 113:062438, Jun 2026.
\newblock URL: \url{https://link.aps.org/doi/10.1103/hts6-k7d3}, \href
  {https://doi.org/10.1103/hts6-k7d3} {\path{doi:10.1103/hts6-k7d3}}.

\bibitem{BH}
Zhiming Huang.
\newblock Quantum-state texture in schwarzschild-de sitter spacetime.
\newblock {\em Physics Letters B}, 880:140770, 2026.
\newblock URL:
  \url{https://www.sciencedirect.com/science/article/pii/S0370269326006210},
  \href {https://doi.org/10.1016/j.physletb.2026.140770}
  {\path{doi:10.1016/j.physletb.2026.140770}}.

\bibitem{zhang2026}
Anqi Zhang, Yanze Zheng, Xiaofen Huang, and Tinggui Zhang.
\newblock Quantum correlations of tripartite mixed states in the black hole
  quantum atmosphere, 2026.
\newblock URL: \url{https://arxiv.org/abs/2607.14462}, \href
  {https://arxiv.org/abs/2607.14462} {\path{arXiv:2607.14462}}.

\bibitem{sup}
Xiaotong Wang, Shunlong Luo, and Yue Zhang.
\newblock Phase-{{Sensitive Superposition}} of {{Quantum States}}.
\newblock {\em Annalen der Physik}, 538(3):e70170, 2026.
\newblock \href {https://doi.org/10.1002/andp.70170}
  {\path{doi:10.1002/andp.70170}}.

\bibitem{ineq}
{ Richard Bellman} {Edwing Beckenbach}.
\newblock {\em Inequalities}.
\newblock Springer-Verlag, 1961.

\bibitem{Gottesman1997}
Daniel Gottesman.
\newblock {\em Stabilizer Codes and Quantum Error Correction}.
\newblock PhD thesis, California Institute of Technology, 1997.
\newblock \href {https://arxiv.org/abs/quant-ph/9705052}
  {\path{arXiv:quant-ph/9705052}}.

\bibitem{spinsphasetransition2026}
Heitor~P. Casagrande, Isaac~M. Carvalho, William~J. Munro, and Krissia
  Zawadzki.
\newblock Textures as a phase-transition probe for quantum spin chains.
\newblock 2026.
\newblock \href {https://arxiv.org/abs/2609.08120} {\path{arXiv:2609.08120}}.

\bibitem{cohen}
Claude {Cohen-Tannoudji}, Bernard Diu, and Franck Lalo{\"e}.
\newblock {\em Quantum {{Mechanics}}, {{Volume}} 1: {{Basic Concepts}},
  {{Tools}}, and {{Applications}}}.
\newblock John Wiley \& Sons, December 2019.

\bibitem{bethruskai_2002_analysis}
Mary Beth~Ruskai, Stanislaw Szarek, and Elisabeth Werner.
\newblock An analysis of completely-positive trace-preserving maps on
  {{M2}}{$<$}math{$><$}mtext{$>$}{{M}}{$<$}/mtext{$><$}msub{$><$}mi{$><$}/mi{$><$}mn{$>$}2{$<$}/mn{$><$}/msub{$><$}/math{$>$}.
\newblock {\em Linear Algebra and its Applications}, 347(1):159--187, May 2002.
\newblock \href {https://doi.org/10.1016/S0024-3795(01)00547-X}
  {\path{doi:10.1016/S0024-3795(01)00547-X}}.

\bibitem{bengtsson_2006_geometry}
Ingemar Bengtsson and Karol Zyczkowski.
\newblock {\em Geometry of {{Quantum States}}: {{An Introduction}} to {{Quantum
  Entanglement}}}.
\newblock Cambridge University Press, Cambridge, 2006.
\newblock \href {https://doi.org/10.1017/CBO9780511535048}
  {\path{doi:10.1017/CBO9780511535048}}.

\bibitem{watrous_2018_theory}
John Watrous.
\newblock {\em The {{Theory}} of {{Quantum Information}}}.
\newblock Cambridge University Press, Cambridge, 2018.
\newblock \href {https://doi.org/10.1017/9781316848142}
  {\path{doi:10.1017/9781316848142}}.

\bibitem{lin_2026_quantumstate}
Yufan Lin, Yu~Guo, Fei He, and Shuanping Du.
\newblock Quantum-state texture measure and texture transformation, September
  2026.
\newblock \href {https://arxiv.org/abs/2609.08163} {\path{arXiv:2609.08163}},
  \href {https://doi.org/10.48550/arXiv.2609.08163}
  {\path{doi:10.48550/arXiv.2609.08163}}.

\bibitem{bertlmann2008bloch}
Reinhold~A Bertlmann and Philipp Krammer.
\newblock Bloch vectors for qudits.
\newblock {\em J. Phys. A: Math. Theor.}, 41(23):235303, May 2008.
\newblock \href {https://doi.org/10.1088/1751-8113/41/23/235303}
  {\path{doi:10.1088/1751-8113/41/23/235303}}.

\bibitem{kibler_2008_variations}
Maurice~R Kibler.
\newblock Variations on a theme of {{Heisenberg}}, {{Pauli}} and {{Weyl}}*.
\newblock {\em J. Phys. A: Math. Theor.}, 41(37):375302, August 2008.
\newblock \href {https://doi.org/10.1088/1751-8113/41/37/375302}
  {\path{doi:10.1088/1751-8113/41/37/375302}}.

\bibitem{siewert2022orthogonal}
Jens Siewert.
\newblock On orthogonal bases in the {{Hilbert-Schmidt}} space of matrices.
\newblock {\em Journal of Physics Communications}, 6(5):055014, May 2022.
\newblock \href {https://arxiv.org/abs/2205.06035} {\path{arXiv:2205.06035}},
  \href {https://doi.org/10.1088/2399-6528/AC6F43}
  {\path{doi:10.1088/2399-6528/AC6F43}}.

\bibitem{Nielsen2010}
Michael~A. Nielsen and Isaac~L. Chuang.
\newblock {\em Quantum {{Computation}} and {{Quantum Information}}: 10th
  {{Anniversary Edition}}}.
\newblock Cambridge University Press, December 2010.

\bibitem{Basile2024}
Tom{\'a}s Basile, Jose~Alfredo {de Leon}, Alejandro Fonseca, Fran{\c c}ois
  Leyvraz, and Carlos Pineda.
\newblock Weyl channels for multipartite systems.
\newblock {\em Phys. Rev. A}, 109(3):032607, March 2024.
\newblock \href {https://doi.org/10.1103/PhysRevA.109.032607}
  {\path{doi:10.1103/PhysRevA.109.032607}}.

\bibitem{Fonseca2019}
Alejandro Fonseca.
\newblock High-dimensional quantum teleportation under noisy environments.
\newblock {\em Physical Review A}, 100:062311, 12 2019.
\newblock URL: \url{https://link.aps.org/doi/10.1103/PhysRevA.100.062311
  http://arxiv.org/abs/1908.01097}, \href
  {https://doi.org/10.1103/PhysRevA.100.062311}
  {\path{doi:10.1103/PhysRevA.100.062311}}.

\bibitem{Dutta16}
Arijit Dutta, Junghee Ryu, Wiesław Laskowski, and Marek Żukowski.
\newblock Entanglement criteria for noise resistance of two-qudit states.
\newblock {\em Physics Letters A}, 380(27):2191 -- 2199, 2016.
\newblock URL:
  \url{http://www.sciencedirect.com/science/article/pii/S0375960116301700},
  \href {https://doi.org/10.1016/j.physleta.2016.04.043}
  {\path{doi:10.1016/j.physleta.2016.04.043}}.

\bibitem{PhysRevA.102.012401}
Sumeet Khatri, Kunal Sharma, and Mark~M. Wilde.
\newblock Information-theoretic aspects of the generalized amplitude-damping
  channel.
\newblock {\em Phys. Rev. A}, 102:012401, Jul 2020.
\newblock URL: \url{https://link.aps.org/doi/10.1103/PhysRevA.102.012401},
  \href {https://doi.org/10.1103/PhysRevA.102.012401}
  {\path{doi:10.1103/PhysRevA.102.012401}}.

\bibitem{Chessa2023resonantmultilevel}
Stefano Chessa and Vittorio Giovannetti.
\newblock Resonant {M}ultilevel {A}mplitude {D}amping {C}hannels.
\newblock {\em {Quantum}}, 7:902, January 2023.
\newblock \href {https://doi.org/10.22331/q-2023-01-19-902}
  {\path{doi:10.22331/q-2023-01-19-902}}.

\bibitem{eigenvalues}
Y.-J. Han, Y.-S. Zhang, and G.-C. Guo.
\newblock Compatibility relations between the reduced and global density
  matrices.
\newblock {\em Phys. Rev. A}, 71(5):052306, May 2005.
\newblock \href {https://doi.org/10.1103/PhysRevA.71.052306}
  {\path{doi:10.1103/PhysRevA.71.052306}}.

\bibitem{Yu2016}
Chang-shui Yu, Si-ren Yang, and Bao-qing Guo.
\newblock Total quantum coherence and its applications.
\newblock {\em Quantum Information Processing}, 15(9):3773--3784, September
  2016.
\newblock \href {https://doi.org/10.1007/s11128-016-1376-y}
  {\path{doi:10.1007/s11128-016-1376-y}}.

\bibitem{Radhakrishnan2019}
Chandrashekar Radhakrishnan, Zhe Ding, Fazhan Shi, Jiangfeng Du, and Tim
  Byrnes.
\newblock Basis-independent quantum coherence and its distribution.
\newblock {\em Annals of Physics}, 409:167906, October 2019.
\newblock \href {https://doi.org/10.1016/j.aop.2019.04.020}
  {\path{doi:10.1016/j.aop.2019.04.020}}.

\bibitem{Jin2024}
Ming-Ming Du, Hong-Wei Li, Zhen Tao, Shu-Ting Shen, Xiao-Jing Yan, Xi-Yun Li,
  Wei Zhong, Yu-Bo Sheng, and Lan Zhou.
\newblock Basis-independent quantum coherence and its distribution under
  relativistic motion.
\newblock {\em Eur. Phys. J. C}, 84(8):838, August 2024.
\newblock \href {https://doi.org/10.1140/epjc/s10052-024-13164-z}
  {\path{doi:10.1140/epjc/s10052-024-13164-z}}.

\bibitem{Designolle2021}
S{\'e}bastien Designolle, Roope Uola, Kimmo Luoma, and Nicolas Brunner.
\newblock Set {{Coherence}}: {{Basis-Independent Quantification}} of {{Quantum
  Coherence}}.
\newblock {\em Phys. Rev. Lett.}, 126(22):220404, June 2021.
\newblock \href {https://doi.org/10.1103/PhysRevLett.126.220404}
  {\path{doi:10.1103/PhysRevLett.126.220404}}.

\end{thebibliography}

\appendix
\section{Generalized-mean limits and their relation to the family rugosities}\label{app:meanlimits} % {{{*

In this appendix we justify the generalized-mean limits underlying the arithmetic, geometric, and minimum constructions of family rugosity.

Let
\begin{equation}
M_p(x_1,\ldots,x_N):=
\left(
\frac{1}{N}\sum_{a=1}^N x_a^p
\right)^{1/p},
\qquad p\neq 0,
\label{eq:Mpdef_appendix}
\end{equation}
for positive real numbers \(x_1,\ldots,x_N>0\). The arithmetic mean corresponds to \(p=1\), the geometric mean arises in the limit \(p\to 0\), and the minimum arises in the limit \(p\to -\infty\).

\subsection{The limit \(p\to 0\): geometric mean}

\begin{proposition}
For \(x_1,\ldots,x_N>0\),
\begin{equation}
\lim_{p\to 0} M_p(x_1,\ldots,x_N)
=
\left(\prod_{a=1}^N x_a\right)^{1/N}.
\label{eq:MpGeom_appendix}
\end{equation}
\end{proposition}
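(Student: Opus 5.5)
The plan is to pass to logarithms, since the claimed limit is an exponential of an average of logarithms. For $p\neq 0$ set
\begin{equation*}
g(p):=\ln\!\left(\frac{1}{N}\sum_{a=1}^N x_a^p\right),
\end{equation*}
so that $\ln M_p(x_1,\ldots,x_N)=g(p)/p$. Because every $x_a>0$, we may write $x_a^p=e^{p\ln x_a}$. Each such term is a smooth function of $p$ on all of $\mathbb{R}$. The average $\frac{1}{N}\sum_a x_a^p$ is therefore smooth and strictly positive, so $g$ is smooth on all of $\mathbb{R}$, including $p=0$. In particular $g(0)=\ln 1=0$.

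The key step is to recognize $g(p)/p=(g(p)-g(0))/(p-0)$ as a difference quotient. Its limit as $p\to 0$ is therefore $g'(0)$, which is the same as applying l'H\^opital's rule to the $0/0$ form. Differentiating gives
\begin{equation*}
g'(p)=\frac{\sum_a x_a^p\ln x_a}{\sum_a x_a^p},
\end{equation*}
and hence $g'(0)=\frac{1}{N}\sum_{a}\ln x_a=\ln\bigl(\prod_a x_a\bigr)^{1/N}$.

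Finally, the exponential is continuous. Therefore
\begin{equation*}
\lim_{p\to0}M_p=\exp\Bigl(\lim_{p\to0} g(p)/p\Bigr)=\Bigl(\prod_a x_a\Bigr)^{1/N},
\end{equation*}
which is Eq.~\eqref{eq:MpGeom_appendix}. This covers both one-sided limits $p\to 0^\pm$ simultaneously, since differentiability of $g$ at $0$ is two-sided.

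An alternative route avoids derivatives. One expands $x_a^p=1+p\ln x_a+O(p^2)$ uniformly over the finitely many $a$, which gives $\frac{1}{N}\sum_a x_a^p=1+p\,\overline{\ln x}+O(p^2)$. Then $\ln(1+u)=u+O(u^2)$ yields $g(p)/p=\overline{\ln x}+O(p)$.

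There is no serious obstacle. The only point needing care is the positivity assumption $x_a>0$, which makes $\ln x_a$ finite and $g$ well defined near $0$. If some $x_a=0$, as can happen for overlaps $\Sigma_U(\varrho_a)/D$, the statement is not covered by this argument and would need a separate limiting treatment in which both sides equal $0$. That case lies outside the proposition as stated.
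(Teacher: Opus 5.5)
Your proof is correct and follows essentially the same route as the paper: pass to $\ln M_p=\frac{1}{p}\ln\bigl(\frac{1}{N}\sum_a e^{p\ln x_a}\bigr)$, resolve the $0/0$ form at $p=0$, obtain $\frac{1}{N}\sum_a\ln x_a$, and exponentiate. The paper resolves the $0/0$ form with l'H\^opital's rule, while you use the equivalent difference-quotient observation $g(p)/p=(g(p)-g(0))/p\to g'(0)$, which is marginally cleaner because it needs only differentiability of $g$ at $0$.
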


\begin{proof}
Define $L(p):=\ln M_p(x_1,\ldots,x_N).$
Using Eq.~\eqref{eq:Mpdef_appendix}, this becomes
\begin{equation}
L(p)
=
\frac{1}{p}
\ln\!\left(
\frac{1}{N}\sum_{a=1}^N x_a^p
\right).
\label{eq:Lp_appendix}
\end{equation}
Now write $x_a^p=e^{p\ln x_a},$
so that
\begin{equation}
L(p)
=
\frac{1}{p}
\ln\!\left(
\frac{1}{N}\sum_{a=1}^N e^{p\ln x_a}
\right).
\end{equation}
As \(p\to 0\), each exponential tends to \(1\), and therefore
\begin{equation}
\frac{1}{N}\sum_{a=1}^N e^{p\ln x_a}\longrightarrow 1.
\end{equation}
Hence the numerator in Eq.~\eqref{eq:Lp_appendix} tends to \(\ln 1=0\), while the denominator tends to \(0\). We may therefore apply l'H\^opital's rule:
\begin{align}
\lim_{p\to 0} L(p)
&=
\lim_{p\to 0}
\frac{
\frac{d}{dp}
\ln\!\left(
\frac{1}{N}\sum_{a=1}^N e^{p\ln x_a}
\right)
}{
\frac{d}{dp}(p)
}
=
\lim_{p\to 0}
\frac{
\frac{1}{N}\sum_{a=1}^N (\ln x_a)e^{p\ln x_a}
}{
\frac{1}{N}\sum_{a=1}^N e^{p\ln x_a}
}.
\end{align}
Taking the limit \(p\to 0\) gives
$\lim_{p\to 0} L(p)
=
\frac{1}{N}\sum_{a=1}^N \ln x_a.$
Exponentiating both sides, we obtain
\begin{equation}
\lim_{p\to 0} M_p(x_1,\ldots,x_N)
=
\exp\!\left(
\frac{1}{N}\sum_{a=1}^N \ln x_a
\right)
=
\left(\prod_{a=1}^N x_a\right)^{1/N},
\end{equation}
which proves Eq.~\eqref{eq:MpGeom_appendix}.
\end{proof}

\subsection{The limit \(p\to -\infty\): minimum}

\begin{proposition}
For \(x_1,\ldots,x_N>0\),
\begin{equation}
\lim_{p\to -\infty} M_p(x_1,\ldots,x_N)
=
\min_{1\le a\le N} x_a.
\label{eq:MpMin_appendix}
\end{equation}
\end{proposition}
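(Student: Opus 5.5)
I will prove the limit by a direct squeeze, which avoids the logarithmic l'H\^opital manipulation used for the $p\to0$ case. Set $m:=\min_{1\le a\le N}x_a>0$, and let $k\ge1$ be the number of indices attaining the minimum. For $p<0$ the map $t\mapsto t^p$ is decreasing on $(0,\infty)$, so every term satisfies $x_a^p\le m^p$. Moreover, the $k$ minimal terms contribute exactly $m^p$ each.

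This gives the two-sided bound
\begin{equation}
\frac{k}{N}\,m^p\;\le\;\frac{1}{N}\sum_{a=1}^N x_a^p\;\le\; m^p .
\end{equation}
The step that needs care is raising this to the power $1/p$. Since $1/p<0$, the map $s\mapsto s^{1/p}$ is decreasing on $(0,\infty)$, so both inequalities reverse:
\begin{equation}
m\;\le\;M_p(x_1,\ldots,x_N)\;\le\;\left(\frac{k}{N}\right)^{1/p} m .
\end{equation}
This sign reversal is the only place one could go wrong, so I will state it explicitly in the proof.

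Finally, I let $p\to-\infty$. Then $1/p\to0^-$, and because $k/N\in(0,1]$ is a fixed positive constant, $(k/N)^{1/p}\to1$. The squeeze theorem then gives $\lim_{p\to-\infty}M_p=m$, which is Eq.~\eqref{eq:MpMin_appendix}.

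Alternatively, one could factor out $m$ and write $M_p=m\bigl(\frac1N\sum_a (x_a/m)^p\bigr)^{1/p}$. The inner sum tends to $k/N$, since each ratio $x_a/m$ with $x_a>m$ exceeds $1$ and so $(x_a/m)^p\to0$. Raising to the vanishing power $1/p$ then gives $1$. I will use the squeeze version, since it is shorter and avoids any limit exchange.
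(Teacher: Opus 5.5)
Your squeeze argument is correct and is essentially the paper's proof: the paper factors out $m$ exactly as in your alternative, setting $r_a=x_a/m\ge 1$ and bounding $A_p=\bigl(\frac{1}{N}\sum_a r_a^p\bigr)^{1/p}$ using the term with $r_a=1$ and the fact that $r_a^p\le 1$. Your explicit attention to the reversal under $s\mapsto s^{1/p}$ is well placed, since the paper's written chain $(1/N)^{1/p}\le A_p\le 1$ has both inequalities flipped (the correct chain is $1\le A_p\le (1/N)^{1/p}$, matching your $m\le M_p\le (k/N)^{1/p}m$), although the squeeze conclusion is unaffected.
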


\begin{proof}
Let $m:=\min_{1\le a\le N} x_a.$
Since \(x_a>0\), we may write
$x_a=m\,r_a,\, r_a\ge 1,$
and at least one of the \(r_a\)'s is exactly equal to \(1\).

Substituting into Eq.~\eqref{eq:Mpdef_appendix}, we find
\begin{align}
M_p(x_1,\ldots,x_N)
&=
\left(
\frac{1}{N}\sum_{a=1}^N (mr_a)^p
\right)^{1/p}
=
\left(
m^p\frac{1}{N}\sum_{a=1}^N r_a^p
\right)^{1/p}
=
m
\left(
\frac{1}{N}\sum_{a=1}^N r_a^p
\right)^{1/p}.
\label{eq:MpFactor_appendix}
\end{align}
It therefore suffices to study
\begin{equation}
A_p:=
\left(
\frac{1}{N}\sum_{a=1}^N r_a^p
\right)^{1/p},
\qquad r_a\ge 1,
\end{equation}
with \(\min_a r_a=1\).

We first derive a lower bound. Since at least one \(r_a\) equals \(1\), one term in the sum is \(1^p=1\), and all other terms are nonnegative. Hence
\begin{equation}
\frac{1}{N}\sum_{a=1}^N r_a^p\ge \frac{1}{N},
\end{equation}
which implies
\begin{equation}
A_p\ge \left(\frac{1}{N}\right)^{1/p}.
\label{eq:ApLower_appendix}
\end{equation}
Now
\begin{equation}
\left(\frac{1}{N}\right)^{1/p}
=
\exp\!\left(\frac{\ln(1/N)}{p}\right)
\longrightarrow 1
\qquad (p\to -\infty).
\label{eq:LowerLimit_appendix}
\end{equation}

We next derive an upper bound. Since \(r_a\ge 1\) and \(p<0\), raising to a negative power reverses the inequality, so $r_a^p\le 1 \;\forall a$.

Therefore
\begin{equation}
\frac{1}{N}\sum_{a=1}^N r_a^p\le 1,
\end{equation}
and hence
\begin{equation}
A_p\le 1.
\label{eq:ApUpper_appendix}
\end{equation}

Combining Eqs.~\eqref{eq:ApLower_appendix}, \eqref{eq:LowerLimit_appendix}, and \eqref{eq:ApUpper_appendix}, we obtain
\begin{equation}
\left(\frac{1}{N}\right)^{1/p}\le A_p\le 1,
\end{equation}
and both bounds tend to \(1\) as \(p\to -\infty\). By the squeeze theorem, $\lim_{p\to -\infty} A_p=1$.
Returning to Eq.~\eqref{eq:MpFactor_appendix}, we conclude that
\begin{equation}
\lim_{p\to -\infty} M_p(x_1,\ldots,x_N)=m=\min_a x_a,
\end{equation}
which proves Eq.~\eqref{eq:MpMin_appendix}.
\end{proof}

\section{Mixed tensor-product and direct-sum encodings}\label{app:mixed}

In this appendix we examine composite encodings built from both tensor products and direct sums. These constructions clarify two points. First, mixed encodings do not in general produce a single generalized mean, but rather a nested aggregation of overlaps. Second, a particularly simple symmetric case provides a natural realization of the positive-integer framework of the generalized-mean hierarchy.

A preliminary remark is in order. If one literally repeats the same state \(\varrho\) in every slot of a mixed construction, then the resulting overlap is built from repeated copies of the same scalar
\begin{equation}
\Sigma_U(\varrho):= D \bra{f_1}U\varrho U^\dagger\ket{f_1},
\end{equation}
and the whole construction collapses to the optimized single-state quantity
\begin{equation}
-\ln\max_U \frac{\Sigma_U(\varrho)}{D}=-\ln \lambda^{\uparrow}(\varrho).
\end{equation}
Thus, in order to obtain a genuinely new family-level quantity, one must allow different states in different slots.

We therefore consider a two-index family \(\{\varrho_{rj}\}\), with
$r=1,\ldots,\ell,
\;
j=1,\ldots,k,$ 
and define
$\Sigma_U(\varrho_{rj}):= D \bra{f_1}U\varrho_{rj}U^\dagger\ket{f_1}.$

\subsection{Tensor products of direct-sum blocks}

Consider first the mixed encoding
$\Omega_{\oplus\to\otimes}
:=
\bigotimes_{r=1}^{\ell}
\left(
\varrho_{r1}\oplus\cdots\oplus\varrho_{rk}
\right),$ 
acting on the Hilbert space
$\bigotimes_{r=1}^{\ell}
\left(
\mathcal H\oplus\cdots\oplus\mathcal H
\right).$

If the same single-system basis change is applied to every block, the corresponding unitary is
\begin{equation}
U_{\oplus\to\otimes}
=
\bigotimes_{r=1}^{\ell}
(U\oplus\cdots\oplus U),
\end{equation}
and the natural reference vector is
\begin{equation}
\ket{F_{\oplus\to\otimes}}
=
\bigotimes_{r=1}^{\ell}
\frac{1}{\sqrt{k}}
(\ket{f_1}\oplus\cdots\oplus\ket{f_1}).
\end{equation}
A direct computation gives
\begin{equation}
\bra{F_{\oplus\to\otimes}}
U_{\oplus\to\otimes}
\Omega_{\oplus\to\otimes}
U_{\oplus\to\otimes}^\dagger
\ket{F_{\oplus\to\otimes}}
=
\prod_{r=1}^{\ell}
\left(
\frac{1}{k}\sum_{j=1}^k \frac{\Sigma_U(\varrho_{rj})}{D}
\right).
\label{eq:mixedGA}
\end{equation}
Thus this mixed construction yields a \emph{geometric combination of arithmetic averages}. Its natural intensive logarithmic version is
\begin{equation}
\mathfrak{R}_{\oplus\to\otimes}
:=
-\frac{1}{\ell}
\ln\!\left[
\max_U
\prod_{r=1}^{\ell}
\left(
\frac{1}{k}\sum_{j=1}^k \frac{\Sigma_U(\varrho_{rj})}{D}
\right)
\right].
\end{equation}
Equivalently,
\begin{equation}
\mathfrak{R}_{\oplus\to\otimes}
=
-\ln\!\left[
\max_U
\left(
\prod_{r=1}^{\ell}
A_r(U)
\right)^{1/\ell}
\right],
\qquad
A_r(U):=\frac{1}{k}\sum_{j=1}^k \frac{\Sigma_U(\varrho_{rj})}{D}.
\end{equation}

\subsection{Direct sums of tensor-product blocks}

Now consider the opposite order,
\begin{equation}
\Omega_{\otimes\to\oplus}
:=
\left(
\varrho_{11}\otimes\cdots\otimes\varrho_{1\ell}
\right)
\oplus
\cdots
\oplus
\left(
\varrho_{k1}\otimes\cdots\otimes\varrho_{k\ell}
\right).
\end{equation}
The corresponding unitary is
\begin{equation}
U_{\otimes\to\oplus}
=
(U^{\otimes \ell})\oplus\cdots\oplus(U^{\otimes \ell}),
\end{equation}
and the natural reference vector is
\begin{equation}
\ket{F_{\otimes\to\oplus}}
=
\frac{1}{\sqrt{k}}
\bigl(
\ket{f_1}^{\otimes \ell}\oplus\cdots\oplus\ket{f_1}^{\otimes \ell}
\bigr).
\end{equation}
The induced overlap is
\begin{equation}
\bra{F_{\otimes\to\oplus}}
U_{\otimes\to\oplus}
\Omega_{\otimes\to\oplus}
U_{\otimes\to\oplus}^\dagger
\ket{F_{\otimes\to\oplus}}
=
\frac{1}{k}
\sum_{j=1}^k
\prod_{r=1}^{\ell} \frac{\Sigma_U(\varrho_{jr})}{D}.
\label{eq:mixedAG}
\end{equation}
Thus this second mixed construction yields an \emph{arithmetic combination of multiplicative overlaps}. The corresponding intensive logarithmic quantity is
\begin{equation}
\mathfrak{R}_{\otimes\to\oplus}
:=
-\frac{1}{\ell}
\ln\!\left[
\max_U
\frac{1}{k}
\sum_{j=1}^k
\prod_{r=1}^{\ell} \frac{\Sigma_U(\varrho_{jr})}{D}
\right].
\end{equation}

In general, the two nested constructions are inequivalent:
\begin{equation}
\prod_{r=1}^{\ell}
\left(
\frac{1}{k}\sum_{j=1}^k \frac{\Sigma_U(\varrho_{rj})}{D}
\right)
\neq
\frac{1}{k}
\sum_{j=1}^k
\prod_{r=1}^{\ell} \frac{\Sigma_U(\varrho_{jr})}{D}.
\end{equation}
They therefore define different notions of family compatibility.

\subsection{A symmetric case and the positive-\(p\) framework}

A particularly simple special case is obtained by taking direct sums of tensor powers,
\begin{equation}
\Omega_{\ell}
:=
\varrho_1^{\otimes \ell}
\oplus
\cdots
\oplus
\varrho_k^{\otimes \ell}.
\end{equation}
This is a special case of the previous construction with
$\varrho_{jr}=\varrho_j
\;
\text{for all }r=1,\ldots,\ell.$
Then Eq.~\eqref{eq:mixedAG} reduces to
\begin{equation}
\bra{F_{\otimes\to\oplus}}
U_{\otimes\to\oplus}
\Omega_{\ell}
U_{\otimes\to\oplus}^\dagger
\ket{F_{\otimes\to\oplus}}
=
\frac{1}{k}\sum_{j=1}^k \left(\frac{\Sigma_U(\varrho_{j})}{D}\right)^\ell,
\qquad
\Sigma_U(\varrho_{j}):= D \bra{f_1}U\varrho_jU^\dagger\ket{f_1}.
\end{equation}
The corresponding intensive logarithmic quantity is therefore
\begin{align}
\mathfrak{R}_{\ell}(\mathcal S)
&=
-\frac{1}{\ell}
\ln\!\left[
\max_U
\frac{1}{k}\sum_{j=1}^k \left(\frac{\Sigma_U(\varrho_{j})}{D}\right)^\ell
\right]=
-\ln\!\left[
\max_U
\left(
\frac{1}{k}\sum_{j=1}^k \left(\frac{\Sigma_U(\varrho_{j})}{D}\right)^\ell
\right)^{1/\ell}
\right].
\end{align}
This is exactly the generalized-mean rugosity \(\mathfrak{R}_p\) evaluated at
$p=\ell.$
Thus, positive integer values of \(p\) admit a natural realization as direct sums of tensor powers.

By contrast, if one considers the opposite symmetric construction
$(\varrho_1\oplus\cdots\oplus\varrho_k)^{\otimes \ell},$
then Eq.~\eqref{eq:mixedGA} gives
\begin{equation}
\left(
\frac{1}{k}\sum_{j=1}^k \frac{\Sigma_U(\varrho_{j})}{D}
\right)^\ell,
\end{equation}
and the corresponding intensive logarithmic quantity reduces simply to
\begin{equation}
-\ln\!\left[
\max_U
\frac{1}{k}\sum_{j=1}^k \frac{\Sigma_U(\varrho_{j})}{D}
\right]
=
\mathfrak{R}_A(\mathcal S).
\end{equation}
So repeated tensoring of the same direct-sum block does not generate a new framework of the generalized-mean hierarchy; it merely reproduces the arithmetic one.

These constructions show that mixed direct-sum/tensor-product encodings do not, in general, lead to one more simple generalized mean. Rather, they naturally produce \emph{nested means}. Nevertheless, the symmetric case of direct sums of tensor powers gives a particularly suggestive interpretation of the positive-integer framework of the generalized-mean hierarchy. In this sense, the hierarchy \(\mathfrak{R}_p\) is not only a formal interpolation between arithmetic, geometric, minimum, and maximum constructions, but also admits partial realization through suitably structured composite encodings.

% *}}}
\section{Derivation of the \cj{} matrix for single-qubit QST free operations} % {{{*
\label{app:free-qubit-choi} 

Here we derive the general form of the \cj{} matrix used in Sec.~\ref{sec:free-qubit-operations}. Let us define $F_{jk}=\dyad{f_j}{f_k}$. The unnormalized \cj{} matrix of \Eref{eq:cj_free_ops} can be written as a $2\times2$ block matrix:
\begin{equation}
  J_\Lambda =
  \mqty(
    \Lambda(F_{11}) & \Lambda(F_{12}) \\
    \Lambda(F_{21}) & \Lambda(F_{22})
  ).
\end{equation}
Since $J_\Lambda$ needs to be a Hermitian matrix, and considering the fixed-point condition $\Lambda(f_1)=f_1$:
\begin{equation}\label{eq:app-free-choi-blocks}
  J_\Lambda =
  \mqty(
    F_{11} & \Lambda(F_{12}) \\
    \qty[\Lambda\qty(F_{12})]^\dagger & \Lambda(F_{22})
  ).
\end{equation}

We next impose trace preservation, $\Tr[\Lambda(F_{jk})]=\Tr(F_{jk})=\delta_{jk}$, with $\delta_{jk}$ the Kronecker delta. In terms of the \cj{} matrix this reads
\begin{equation}
    \Tr_2\qty(J_\Lambda)
    =
    \mathds{1}.
    \label{eq:app-free-TP}
\end{equation}
Taking the partial trace of Eq.~\eqref{eq:app-free-choi-blocks} and evaluating \Eref{eq:app-free-TP} yields
\begin{equation}\label{eq:app-free-block-traces}
  \Tr\qty[\Lambda\qty(F_{12})] = 0,
  \qquad
  \Tr\qty[\Lambda\qty(F_{22})] = 1.
\end{equation}

A general complex $2\times2$ matrix with zero trace can therefore be parametrized as
\begin{equation}\label{eq:Lambda12}
  \Lambda\qty(F_{12}) =
  \mqty(
    a & b \\
    c & -a
  ),
\end{equation}
where $a$, $b$, and $c$ are complex. Similarly, the most general Hermitian $2\times2$ matrix with unit trace can be written as
\begin{equation}\label{eq:Lambda22}
  \Lambda\qty(F_{22}) =
  \mqty(
    1-\eta & q \\
    q^* & \eta
  ),
\end{equation}
where $\eta$ is real and $q$ is complex. 

Substituting $F_{11}$, and Eqs.~\eqref{eq:Lambda12}  and \eqref{eq:Lambda22} into \Eref{eq:app-free-choi-blocks}, and writing the resulting matrix in the ordered basis $\{\ket{f_1\otimes f_1},\ket{f_1\otimes f_2}, \ket{f_2\otimes f_1},\ket{f_2\otimes f_2}\}$ yields the most general Hermitian \cj{} matrix compatible with trace preservation and the fixed-point condition $\Lambda(f_1)=f_1$:
\begin{equation}
  J_\Lambda =
  \mqty(
    1 & 0 & a & b\\
    0 & 0 & c & -a \\
    a^* & c^* & 1-\eta & q \\
    b^* & -a^* & q^* & \eta
  ),
\end{equation}
% *}}}

\end{document}